\def\@seccntformat#1{\@ifundefined{#1@cntformat}%
{\csname the#1\endcsname\;}
{\csname #1@cntformat\endcsname}
}
\def\section@cntformat{\thesection.\;} 
\def\subsection@cntformat{\thesubsection.\;} 
\theoremstyle{definition}
\newtheorem{theorem}{Theorem}
\newtheorem{remark}{Remark}
\newtheorem{definition}{Definition}
\newtheorem{proposition}{Proposition}
\newtheorem{lemma}{Lemma}
\numberwithin{theorem}{section}
\numberwithin{definition}{section}
\numberwithin{condition}{section}
\numberwithin{proposition}{section}
\numberwithin{lemma}{section}
\numberwithin{condition}{section}
\numberwithin{corollary}{section}
\numberwithin{equation}{section}
\begin{document}
\thispagestyle{first}
\vspace*{3cm}
{\noindent\huge  
Hedging Crop Yields Against Weather Uncertainties--A Weather Derivative Perspective}\\[1cm]
	{ \textbf
		{Samuel Asante Gyamerah$^\mathrm{}$\footnote{\emph{*Corresponding author: E-mail: saasgyam@gmail.com}}, Philip Ngare$ ^\mathrm{2}$,Dennis Ikpe$ ^{\mathrm{3},\mathrm{4}}$}}\\[1mm]
	$^\mathrm{1}${\footnotesize \it Pan African University, Institute of Basic Sciences Technology and Innovation, Kenya.
}
	$^\mathrm{2}${\footnotesize \it University of Nairobi, Kenya}	
	$^\mathrm{3}${\footnotesize \it African Institute for Mathematical Sciences, South Africa.}
	$^\mathrm{4}${\footnotesize \it Michigan State University, USA.}	

{\rule{0.7\textwidth}{2pt}}\\[0.2cm]
{\bf\large Abstract}\\
The effects of weather on agriculture in recent years have become a major global concern. Hence, the need for an effective weather risk management tool (i.e., weather derivatives) that can hedge crop yields against weather uncertainties. However, most smallholder farmers and agricultural stakeholders are unwilling to pay for the price of weather derivatives (WD) because of the presence of basis risks (product-design and geographical) in the pricing models. 
To eliminate product-design basis risks, a machine learning ensemble technique was used to determine the relationship between maize yield and weather variables. The results revealed that the most significant weather variable that affected the yield of maize was average temperature. A mean-reverting model with a time-varying speed of mean reversion, seasonal mean, and local volatility that depended on the local average temperature was then proposed. The model was extended to a multi-dimensional model for different but correlated locations. Based on these average temperature models, pricing models for futures, options on futures, and basket futures for cumulative average temperature and growing degree-days are presented. Pricing futures on baskets reduces geographical basis risk, as buyers have the opportunity to select the most appropriate weather stations with their desired weight preference. With these pricing models, farmers and agricultural stakeholders can hedge their crops against the perils of extreme weather.
\vspace{0.5cm}\\
{\bf\large Keywords}\\
basis risk, agricultural risk management, weather derivatives, cumulative average temperature, growing degree-days 
\vspace{0cm}\\
{\rule{0.7\textwidth}{2pt}}
\begin{itemize}
	\item Published as:\\
	Samuel Asante Gyamerah, Philip Ngare and Dennis Ikpe.\\
	Mathematical and Computational Applications, Vol. 24, No. 3, 2019\\
	\url{https://doi.org/10.3390/mca24030071} 
\end{itemize}

\section{Introduction}
\label{S:1}
Agriculture continues to be an important sector that contributes to Ghana's exports earnings, inputs for most manufacturing sectors, and revenue generation for majority of the population. The agriculture and agribusiness sector account for a significant share of the major economic activities in Ghana and a major source of income for most smallholder farmers. It is reported to account for about 70\% of the labour force and more than 25\% of the gross domestic product in Africa \citep{uneca2009}. This makes agriculture one of the most important and largest sector in the development of the economies in Africa, of which Ghana is a member state. However, in Ghana, the sector continues to be controlled by primary production as a result of high weather variability and hydrological flows, especially in the Northern savannas \citep{ibn2018nexus}. The Northern savannas of Ghana have experienced perennial extreme flooding and droughts, both linked to extreme heat and temperatures \citep{ibn2018nexus}. These have contributed to crop failures and have consequently led to extensive impacts on the economic activities of most rural farmers.
\\
[2mm]
Weather variables are difficult to mitigate, especially for smallholder farmers in most developing and under-developed countries, and have great effects on the farming activities of these farmers. For this reason, an effective and reliable risk management tool, weather derivative (WD), is needed to hedge farmers and stakeholders from the peril of weather uncertainties. By linking the payoffs to a fairly measured weather index (e.g. temperature, rainfall, humidity, and sunshine), a WD reduces or eliminates the disadvantages of traditional insurance, such as moral hazards and information asymmetry. Turvey \cite{turvey2001weather} examined the pricing of weather derivatives in Ontario and contended that weather derivatives and weather insurance can be used as a form of agricultural risk management tools. Zong and Ender  \cite{zong2016spatially} developed a novel type of weather derivatives contract called a climatic zone-based growth degree-day contract. Their aim was to mitigate weather risk in the agricultural sector of mainland China by introducing new types of temperature indices. Even though several weather risk management tools have been recently introduced into the WD market for smallholder farmers in most countries around the world, their purchases have been lower than expected. Among the reasons causing the low purchase of WD and the unwillingness of farmers to pay for WD in most communities are the lack of capacity to design and determine the value of this insurance product and, most importantly, high basis risk (product-design and geographical) in the contract design and implementation \cite{woodard2008basis}. Basis risk in WD market is defined as the difference between the actual loss and the actual payout of WD \cite{rohrer2004relevance}. Using a different index rather than the actual index that affects a specific crop yield at a location can lead to a larger gap between the real exposure and the payoff (product-design basis risk). Product design basis risk can be mitigated if the appropriate weather observation is used to construct the index for WD. Hence, a complete evaluation of the relationship between historical weather and crop yield data is significant for an effective and reliable design of WD. Forecasting of crop yields and feature importance of different weather indices for crop yields are, therefore, principal components for an effective rate-making process in the insurance/derivatives market. Another form of basis risk--geographical basis risk occurs when there is a deviation of weather conditions at the measurement station of the weather derivative and the weather conditions at the location of the buyer \cite{ritter2014minimizing}. Spicka and Hnilica \cite{spicka2013methodical} evaluated the effectiveness of weather derivatives as a revenue risk management tool by {considering} crop growing conditions in the Czech Republic. The authors concluded that high basis risk can significantly misrepresent the payoff of the contract.  Musshoff et al. \cite{musshoff2011management}, in their research, concluded that hedging effectiveness for the agricultural sector using weather derivatives is controlled by the contract design. They categorized basis risk into local and geographical basis risk, and asserted that basis risk has a greater influence on the hedging effectiveness of weather derivatives.
\\
In this study, geographical basis {risk} is mitigated by pricing futures on a temperature basket rather than a single index {contract}. This requires the determination of the correlation between the locations under study. 
\\
[2mm]
In the literature of weather derivative pricing, different methods have been proposed for pricing temperature-based weather derivatives. Among these methods are the indifference pricing approach, actuarial pricing methods, the equilibrium model approach, and incomplete market pricing models. The indifference pricing approach is a valuation method for weather derivatives which is founded on the arguments of expected utility. It is also referred to as the incomplete market pricing model, reservation price, or private valuation. This approach for pricing has been used in most traditional financial derivatives where the market is incomplete (see \cite{zariphopoulou2001solution, henderson2002valuation,owen2002utility}). In the weather derivative market, Davis \cite{davis2001pricing} used the marginal utility approach or "shadow price" method of mathematical economics to price weather derivatives. The basis for their pricing approach was centered on the concept that investors in the weather derivative market are not representative but experience distinct risks that are linked to the effect of weather on their business. Brockett et al. \cite{brockett2009pricing} used the indifference pricing approach to value weather derivative futures and options. The relationships between the indifference pricing and actuarial pricing approaches for weather derivatives were studied in a mean-variance context, where they provided instances that the actuarial pricing method does not give a distinct valuation for weather derivatives. From the concept of utility maximization, Barrieu and Karoui \cite{barrieu2002optimal} calculated the optimal profile (and its value) of derivatives written on an illiquid asset; for example, the weather or a catastrophic event. Cao and Wei \cite{cao2000equilibrium} generalized the equilibrium model of \cite{lucas1978asset} by incorporating weather as a basic variable in the economy. Based on this, they proposed an equilibrium valuation framework for temperature-based weather derivatives. Using a model that captured the daily temperature dynamics of five cities in the United States of America (Atlanta, Chicago, Dallas, New York, and Philadelphia), they performed numerical analysis for forward and option contracts on heating degree-days (HDDs) and cooling degree-days (CDDs). Their analysis revealed that the market price of risk is mostly trivial when linked to the temperature variable, particularly when the aggregate dividend process is mean-reverting. They showed that unrealistic assumptions in historical simulation methods result in inaccurate pricing of temperature-based weather derivatives. As stated by \cite{alaton2002modelling}, the weather derivative market is a typical example of an incomplete market. Hence, pricing models based on incomplete markets are the most applicable valuation method for weather derivatives. These models consider the hedgeable and unhedgeable components of risk. The price of a weather derivative is usually dependent on different weather indices, such as HDD, CDD, Pacific Rim (PRIM), cumulative average temperature (CAT), and growing degree-days (GDD). Different authors \citep{alaton2002modelling, mraoua2007temperature} have used the HDD and CDD indices as the major indices for pricing weather derivatives for the energy industry.  
\\
[2mm]
The contributions made in this study are: (1) We are able to empirically determine the main underlying weather variable (average temperature) that affects the yield of the selected crop using machine learning ensemble techniques and feature selection through crop yield forecasting, rather than the usual assumption of using temperature as the underlying without proper empirical studies. This will eliminate product-design basis risk during pricing of the weather derivatives. (2) Previous studies have either used a piecewise constant volatility function or a seasonal volatility (e.g.,~\cite{alaton2002modelling, benth2005stochastic, benth2007volatility}). However, our proposed model includes a local volatility which is able to capture the local variations of the daily average temperature. (3) Futures, options on futures, and basket futures {({futures for multi-dimensional locations})} on CAT and GDD are priced using the constructed daily average temperature model. These pricing models for CAT and GDD are the first of their kind in the literature. (4) The basket futures pricing will help in mitigating geographical basis risks in the weather derivative market.

\section{Crop Yield--Weather Model and Feature Importance for Weather Derivatives}
Different factors, such as condition of the soil, the variety of seedling, and type and amount of fertilizer applied to the soil affect the yield and productivity of crops. These factors can be controlled by the farmer and industry players. Weather variables, especially surface temperature, rainfall, and humidity, are the principal drivers of the differences in crop yields \citep{hu2003climate}. These weather variables directly affect the moisture content of the soil and the level of nutrients in the soil. The effect of the uncertainties in the pattern of weather, both between and within planting seasons, can affect the crop production and yield significantly. This has significantly affected the yield of most crops, causing economic and food security risks in most developing and under-developed countries. Maize is the most widely cultivated staple crop in the Northern savanna, and is the major source of income for about 45\% of households there \citep{wood2013agricultural}. For this reason, the yield of maize was used as a proxy to determine the effect of the selected weather variables on crop yield. This weather variable can, then, be used as the underlying for weather derivatives pricing.

\subsection{Machine Learning Ensemble Technique for Weather-Crop Yield Model}
\label{S:2}
Machine learning (ML) is used in regression and classification to solve most of the problems that arise as a result of nonlinearity in the features and the response variable. In this study, ML ensemble classification algorithms are used to predict the possibility of an improved crop yield harvest or crop yield loss (a two-class (binary) problem). The choice of an optimal ML algorithm for prediction is a major factor to consider in any forecasting problem. In our case, the chosen ML technique should be able to predict whether there will be an increase or decrease in crop yields for a period of years with a small margin of error. The target variable for the ensemble classifier is whether there will be a loss or increase in crop yields for the year-ahead harvest; that is,
\begin{itemize}
	\item If the long-term average crop yield $(\bar{Y})$ is smaller than or equal to the present years crop yield $(Y_{t})$, then there is an increase in crop yield; else:
	\item If the long-term average crop yield $(\bar{Y})$ is greater than the present years crop yield $(Y_{t})$, then there is a decrease or loss in crop yield.
\end{itemize}
We assign labels to this classification: ``0" for a decrease/loss in crop yields and ``1" for an increase in crop yields, 
\begin{equation}
y=
\begin{cases}
1 \qquad\qquad \mbox{if} \quad \bar{Y} \le Y_t,
\\
0 \qquad\qquad \mbox{if} \quad \bar{Y} > Y_t.
\end{cases}
\label{classification}
\end{equation}  
\vspace{6pt}
Humidity, sunlight, rainfall, minimum temperature (minT), maximum temperature (maxT), and average temperature (aveT) are the features ({\bf X}) used in predicting the target variable ({\bf Y}). Now, suppose a set of features ${\bf X}=\{x_i \in \mathbb{R}^n\}$ with associated labels ${\bf Y}=\{y_i \in Y, y_i \in (0,1) \}$, are given for a training data set $T=\{(x_i, y_i)\}$. In this way, we solve the supervised classification problem where the learning model $N$ depends on $D$. 
\\[2mm]
The yearly datasets (2000--2016) for maize yield Bole, Tamale, and Yendi were taken from the Statistics, Research, and Information Directorate (SRID) of the Ministry of Food and Agriculture, Ghana. Daily historical data for sunlight, humidity, rainfall, maximum and minimum temperature from 2000--2016 were obtained from the Ghana Meteorological Agency, Ghana. K-nearest neighbors (KNN) algorithm was used to compute the missing data points in the dataset. The yearly data points were calculated from the daily data of the selected weather variables using the arithmetic average. The yearly average temperature was computed from the arithmetic average of the yearly maximum and minimum temperature. Data for the weather variables was also taken from Bole, Tamale, and Yendi. These towns are part of the Northern Savanna and they are considered to be the food basket of Ghana. Due to the sensitivity of the ML algorithms used and the unequal weight of the data sets, the sample data sets were set into an identical scale--the min-max normalization scale in the interval $[0,1]$. The sample data sets were divided into training ($80\%$) and testing ($20\%$) data sets. The training data set was used to build the classification ensemble algorithm and the testing data was used to validate the constructed ensemble model. The "accuracy"{({{accuracy is an evaluation metric  used in classification and regression problems}})} of the training model is fined tuned using grid search. The training model was used to predict the crop yield signal (an increase crop yield or a decrease in crop yields). Using accuracy and the receiver operating characteristics curve (ROC) or area under the curve (AUC), the predictions on the testing data set were evaluated. 

\subsection{Model Evaluation and Feature Importance}
Stacking is an ensemble learning technique that combines multiple classification or regression models through a meta-classifier/meta-regressor to a single predictive model to reduce bias (boosting), variance (bagging), and improve the accuracy of predictions (stacking). Using the training data set, the base level classifiers were trained. The meta-model was trained on the outputs of the base level algorithm as features. Stacking ensemble learning algorithms are heterogenous because the base level classifiers are made of different classification algorithms. In this study, Adapted boosting (AdaBoost) and artificial neural network (ANN), were used as the base classifier and gradient boosting machine (GBM) was used as the meta-classifier. Stacking ensemble algorithm is outlined in Algorithm \ref{stacking_algorithm},   
\newline
\begin{algorithm}[H]
	\KwIn{$D : \{(x_i,y_i)\} \mid x_i \in X, y_i \in Y$}
	\KwOut{An ensemble classifier H}
	\nl {\it Step 1.}\: {\it Learn base-level classifiers} \\
	\nl  {\bf for} $t \leftarrow 1$ to $N$ {\bf do} \\
	\nl  \qquad Learn a base classifier $h_t$ based on $T$, $h_t = H_t(T)$ \\
	\nl {\bf end for}\\
	\nl {\it Step 2.}\: {\it Construct new data set from $T$, $T' = \phi$} \\ 
	\nl {\bf for} $i \leftarrow 1$ to $n$ {\bf do} \\ 
	\nl \qquad Construct a new data set that contains  $\{x_i^{new}, y_i\}$ \\ where $\{x_i^{new}, y_i\} = \{h_t(x_i)\,\,\mbox{for}\,\,t=1 \,\,\mbox{to} \,\, N\}$ \\
	\nl {\bf end for}\\
	\nl {\it Step 3.}\: {\it Learn a second-level classifier} \\
	\nl Learn a new classifier $h^{new}$ based on the newly constructed data set.\\
	\nl {\bf Return} $H(x) = h^{new}(h_1(x), h_2(x), \cdots, h_N(x)$)
	\caption{\bf Algorithm of Stacking Ensemble} \label{stacking_algorithm}
\end{algorithm}
\vspace{6pt}
The performance of the evaluation metrics of the proposed stacking ensemble classifier is presented in Table \ref{evaluation_metrics}. The table gives the performance of the ensemble classifier on the testing data set. For binary classification, an AUC value of $50\%$ or less is as good as randomly selecting the labels. An AUC value closer to 1 and an accuracy value closer to $100\%$ indicate the superiority of the proposed model. In general, the classification model was very optimal in predicting the class of the target variable (crop yield) using the selected features (weather variables).
\begin{table}[H] 
	\centering
	\begin{tabular}{cccccc}
		\toprule 
		& {\bf Bole}  & {\bf Tamale} & {\bf Yendi} 
		\\ 
		\midrule
		{\bf Accuracy} & 0.8319 & 0.8207 & 0.8895  
		\\ 
		\midrule
		{\bf AUC} & 0.8104 & 0.7991 & 0.8334  
		\\ 
		\bottomrule 
	\end{tabular}
	\caption{Performance of evaluation metrics of the proposed stacking ensemble classification.}
	\label{evaluation_metrics}
\end{table}
\noindent
From the feature importance in Table \ref{feature_importance}, average temperature, rainfall, and maximum temperature were the three most important features in the classification model used in predicting the yield of maize at Bole. In Tamale, the three most important variables that contributed to predicting the yield of maize were average temperature, minimum temperature, and rainfall. With importance values of $0.3017$, $0.3006$,  and $0.2875$, rainfall, average temperature, and maximum temperature were the three variables that had most effect in predicting maize yield at Yendi. Generally, from these results, average temperature can be considered as the most important underlying weather variable that affects the yield of maize in the Northern region of Ghana.

\begin{table}[H]
	\renewcommand{\arraystretch}{1.0}
	\centering	
	\resizebox{\columnwidth}{!}{%
		\begin{tabular}{>{\bfseries}c*{9}{c}} 
			\toprule
			\multirow{2}{*}{\bfseries Feature}
			& 
			\multicolumn{2}{c}
			{\bfseries  Bole}
			&
			\multicolumn{2}{c}
			{\bfseries \qquad\qquad Tamale}
			&
			\multicolumn{4}{c}
			{\bfseries \qquad\qquad Yendi}
			\\
			\cmidrule(lr){2-3} 
			\cmidrule(lr){5-6}
			\cmidrule(lr){8-9}
			& \thead{\bfseries{Importance} \\ \bfseries{Value}} & \textbf{Rank} 
			& 
			& \thead{\bfseries{Importance} \\ \bfseries{Value}} & \textbf{Rank} 
			&
			& \thead{\bfseries{Importance} \\ \bfseries{Value}} & \textbf{Rank} 
			\\ \hline
			minT & 0.0908 & 4 && 0.2001 & 2 && 0.2452 & 4      
			\\ 
			maxT & 0.2017 & 3 && 0.1402 & 4 && 0.2875 & 3           
			\\
			aveT & 0.3042 & 1 && 0.2075 & 1 && 0.3006 & 2             
			\\
			Rainfall & 0.2552 & 2 && 0.1905 & 3 && 0.3017 & 1          
			\\
			Sunlight  & 0.0646 & 5 && 0.0625 & 6 && 0.2501 & 5         
			\\
			Humidity & 0.0579 & 6 && 0.0983 & 5  && 0.2466 & 6    
			\\
			\bottomrule
		\end{tabular}
	}
	\caption{Performance of evaluation metrics of the proposed stacking ensemble classification.}
	\label{feature_importance}
\end{table}

\section{Temperature-Based Weather Derivatives}
Even though the literature on weather derivatives has evolved rapidly over the past few decades, a consensus theoretical framework for evaluating weather derivatives has not been reached. This can be attributed to the fact that the  underlying factors of WD are not tradable in the financial market and, hence, traditional pricing approaches cannot be used for the valuation of this product. Further, weather indices do not correlate strongly with the prices of other financial products in the financial market and, as a result, the underlying indices cannot be substituted for a linked exchange security.

\subsection{Previous Temperature Dynamics Models} 
Alaton et al. {\cite{alaton2002modelling}  proposed the following mean-reverting model for the dynamics of temperature variations:
	\begin{equation}
	dT(t) = dS(t) + \beta(T(t) - S(t))dt + \sigma(t)dB(t),
	\label{alaton}
	\end{equation}
	where $T(t)$ represents the daily average temperature, $S(t)$ is the deterministic seasonal component, $B(t)$ is a Brownian motion, $\beta$ is a constant mean-reversion rate, and $\sigma(t)$ is the volatility. They assumed that the volatility is a piecewise constant function which characterizes the monthly variation in volatility. From the proposed model of \cite{alaton2002modelling}, Benth and Benth \cite{benth2005stochastic} suggested the following mean-reverting model for the time-dynamics of Norwegian daily average temperature:
	\begin{equation}
	dT(t) = dS(t) + \beta(T(t) - S(t))dt + \sigma(t)dL(t).
	\label{benth_1}
	\end{equation}
	As indicated by \cite{benth2005stochastic}, the only innovation from Equation (\ref{alaton}) is the introduction of a L\'{e}vy noise, $L(t)$, instead of the Brownian motion $B(t)$. They used the generalized hyperbolic distribution, which is a special class of L\'{e}vy process, to capture the skewness and (semi-) heavy tails of their temperature data. However, to allow analytical pricing using the temperature dynamics model, Benth and Benth~\cite{benth2007volatility} later proposed to use a Brownian motion, instead of the L\'{e}vy process, in Equation (\ref{benth_1}). Clearly, References  \cite{alaton2002modelling, benth2005stochastic, benth2007volatility} used a constant mean-reversion rate, a volatility which is piecewise constant function, and they did not consider multi-dimensional locations in their models.    
	
	The introduction of a time-varying speed of mean-reversion, a local volatility that is able to capture the local variations of the daily average temperature at the selected locations, and a multi-dimensional daily average temperature model for multi-locations gave rise to the innovations in our proposed model and pricing. Our proposed models capture all the stylized facts of the selected locations and, to the best of our knowledge, it is the first of its kind in the literature.}

\subsection{Daily Average Temperature Data}

The daily average temperature data (degree Celsius) for Bole and Tamale in the Northern savanna of Ghana, over a measurement period from 01/01/1992 to 31/08/2017, were taken from the Ghana Meteorological Service. The data consisted of the daily maximum and minimum temperatures. Missing data points were computed using KNN. The daily average temperature was computed from the arithmetic average of the daily minimum and maximum temperatures. For consistency in days (365 days) per year, February 29 was removed from the dataset for each leap year. As a result, the total daily average temperature had 9368 data points. 

The seasonal and seasonally-adjusted (de-seasonalized) plot of the daily average temperature data for Bole and Tamale are presented in Figures \ref{seasonal_deseasonal_bole} and \ref{seasonal_deseasonal_tamale}, respectively. 
The spatial correlation between Bole and Tamale was estimated, in order to develop a basket temperature for weather derivatives. The de-seasonalized daily average temperature (as suggested by \cite{alexandridis2012weather}) was used to estimate the correlation matrix between the two towns (Bole and Tamale). From Table \ref{correlation_matrix}, it is clear that the average temperature correlation between the selected locations was very high. This indicates that a maize farm at Bole will encounter the same weather-related risk as a maize farm in Tamale or Yendi, and vice versa. Hence, a weather station in any of these locations can be used for the contract without introducing geographical basis risk in the contract. A farmer with two or more farms at these locations can buy a single temperature basket derivatives contract.

\begin{figure}[H]
	\centering
	\includegraphics[height=9cm,width=16cm]{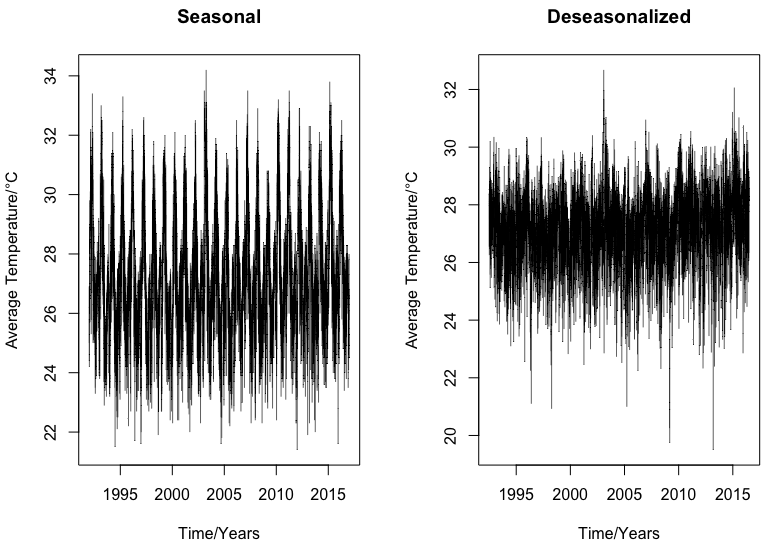}
	\caption{Seasonal and de-seasonalized daily average temperature of Bole.}
	\label{seasonal_deseasonal_bole}
\end{figure}

\begin{figure}[H]
	\centering
	\includegraphics[height=9cm,width=16cm]{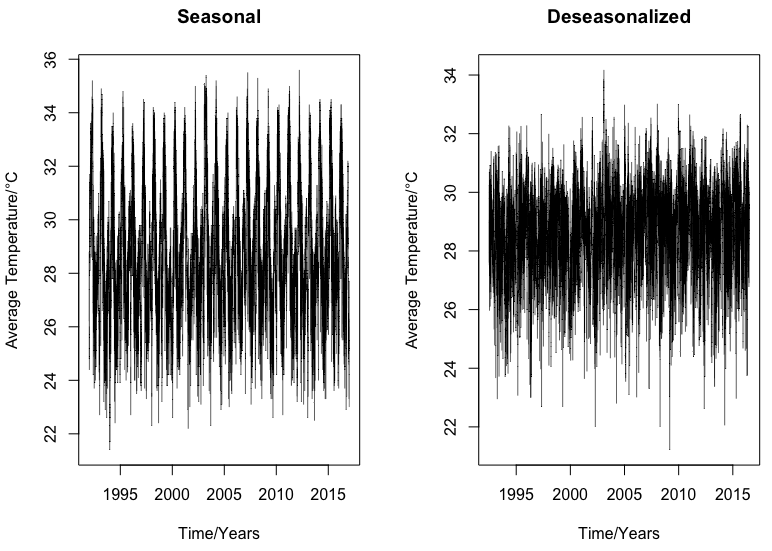}
	\caption{Seasonal and de-seasonalized daily average temperature of Tamale.}
	\label{seasonal_deseasonal_tamale}
\end{figure}

\begin{table}[H]
	\centering
	\begin{tabular}{cccc}		\toprule 
		& \textbf{Bole}  &  \textbf{Tamale} & \textbf{Yendi} \\ 
		\midrule
		\textbf{Bole} &   1  & 0.8733 & 0.8547
		\\
		\textbf{Tamale} & 0.8733   & 1 & 0.8998
		\\
		\textbf{Yendi} & 0.8547 & 0.8998 & 1
		\\
		\bottomrule 
	\end{tabular} 
	\caption{Correlation matrix of the de-seasonalized daily average temperature.}
	\label{correlation_matrix}
\end{table}

\subsection{Stochastic Dynamics of Daily Average Temperature}
Motivated by \cite{jones2003dynamics, alaton2002modelling, benth2005stochastic}, we propose a new average temperature dynamics model which is able to capture major stylized facts of daily average temperature, such as locality features, seasonality, mean-reversion, and volatility. These stylized facts were consistent with the daily average temperature of the chosen location for this study. The proposed one-dimensional model was extended to a multi-dimensional temperature model to cover the selected locations under study. For convenience and analytical tractability, we assumed that the residuals of the daily average temperature were independently and identically distributed (i.i.d.) standard normal. The proposed daily average temperature model is given as
\begin{equation}
dT(t) = dS(t) + \beta(t) \big(T(t) - S(t)\big)dt + \sigma T(t)dB(t), 
\label{proposed_temperature_model}
\end{equation}
where $T(t)$ represents the daily average temperature, $S(t)$ is the deterministic seasonal component, $\beta(t)$ is the time-varying speed of mean-reversion, and $\sigma T(t)$ is the daily average temperature volatility through~time.  

Following \cite{alaton2002modelling}, the seasonality component is defined as
\begin{equation}
S(t) = A + Bt+ C \sin \bigg(\frac{2 \pi t}{365} + \vartheta \bigg),
\label{seasonal_component}
\end{equation}
which is made up of a seasonal component $(C \sin {2 \pi t}/365 + \vartheta)$ and a trend component $(A + Bt)$, where $A$ and $B$ denote the constant and the coefficient of the linearity of the seasonal trend, respectively; $C$ denotes the daily average temperature amplitude, $\vartheta$; and $t$ is the time, measured in days.

Equation (\ref{seasonal_component}) can be transformed to
\begin{equation}
S(t) = a + bt + c \,\,\sin \bigg(\frac{2 \pi t}{365}\bigg) + d \,\, \cos \bigg(\frac{2 \pi t}{365}\bigg).
\label{transformed_seasonal_component}
\end{equation}
By comparing (\ref{seasonal_component}) to (\ref{transformed_seasonal_component}), the relationship of the parameters is given below
\begin{equation*}
A=a;\quad B=b;\quad C=\sqrt{c^2 + d^2}; \quad \vartheta = \arctan \bigg(\dfrac{d}{c}\bigg).
\end{equation*}
The numerical values of the constant in Equation (\ref{transformed_seasonal_component}) are estimated by fitting the function to the historical daily average temperature data using the method of least squares. The seasonal component for Bole and Tamale are given in the following function, respectively, 
\begin{equation}
\begin{aligned}
S(t) &= 22.15 + (4.57 \cdot 10^{-5})t + 1.98 \sin \bigg(\frac{2 \pi t}{365} - 67.71 \bigg),
\\
S(t) &= 18.38 + (7.03 \cdot 10^{-5})t + 2.06 \sin \bigg(\frac{2 \pi t}{365} - 72.89 \bigg).
\end{aligned}
\label{transformed_seasonal_component_values}
\end{equation} 
Using additive seasonal decomposition by moving averages, the daily average temperature was decomposed into a seasonal, linear, and a random (residual) component as shown in Figures \ref{dat_seasonal}--\ref{dat_residuals} respectively. The trend component (Figure \ref{dat_trend}) was smoother than the actual daily average temperature data plot (Figures \ref{seasonal_deseasonal_bole} and \ref{seasonal_deseasonal_tamale}, seasonalized) and captured the main movement of the daily average temperature data without the minor variations. The estimated seasonal component of the daily average temperature is presented in figure \ref{dat_est_seasonal_fig}. 

\begin{figure}[H]
	\centering
	\includegraphics[height=9cm,width=16cm]{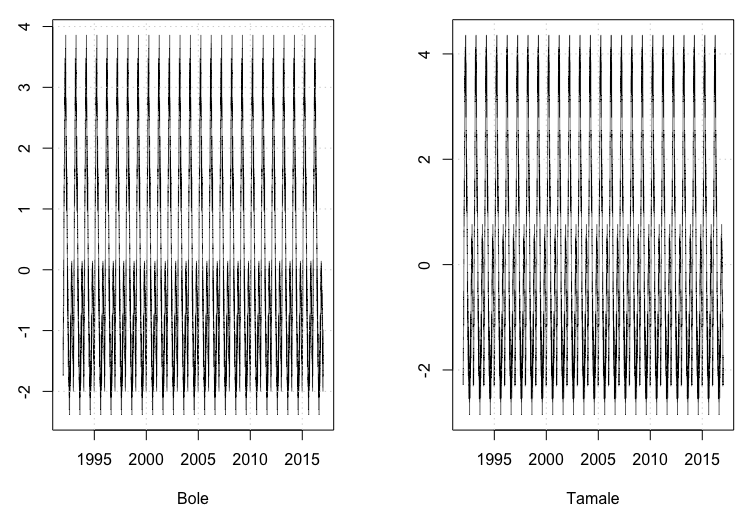}
	\caption{Seasonal trend of the daily average temperature for Bole and Tamale.}
	\label{dat_seasonal}
\end{figure}

\begin{figure}[H]
	\centering
	\includegraphics[height=11cm,width=16cm]{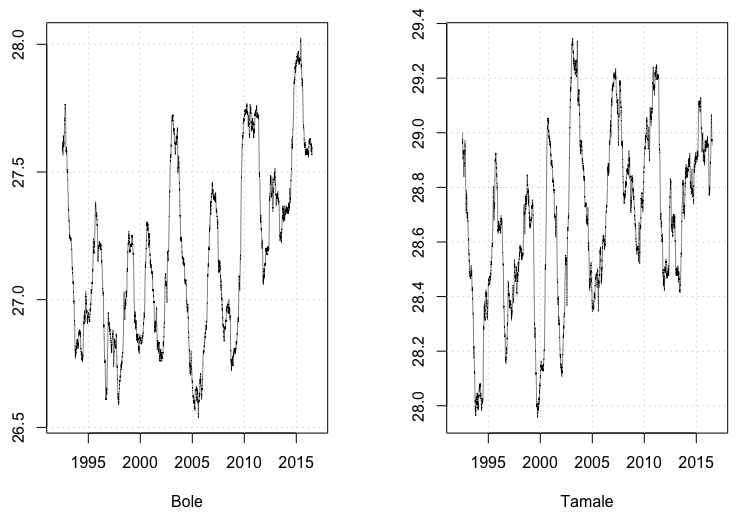}
	\caption{Linear trend of the daily average temperature for Bole and Tamale.}
	\label{dat_trend}
\end{figure}

\begin{figure}[H]
	\centering
	\includegraphics[height=11cm,width=16cm]{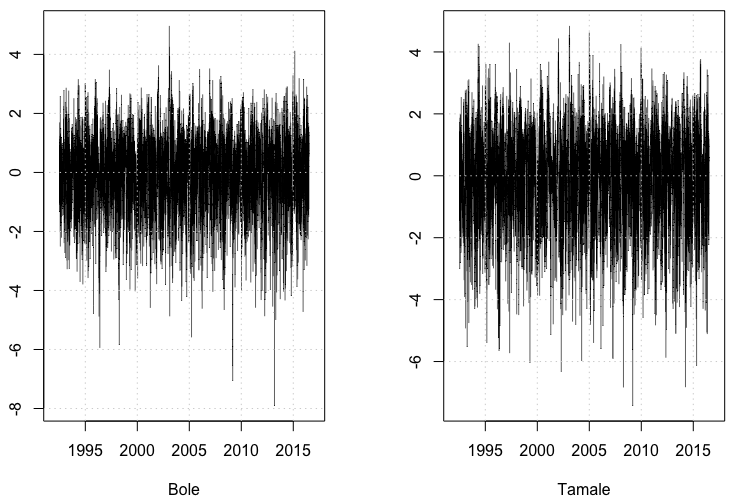}
	\caption{Residuals of the daily average temperature for Bole and Tamale.}
	\label{dat_residuals}
\end{figure}

\begin{figure}[H]
	\centering
	\includegraphics[height=9cm,width=16cm]{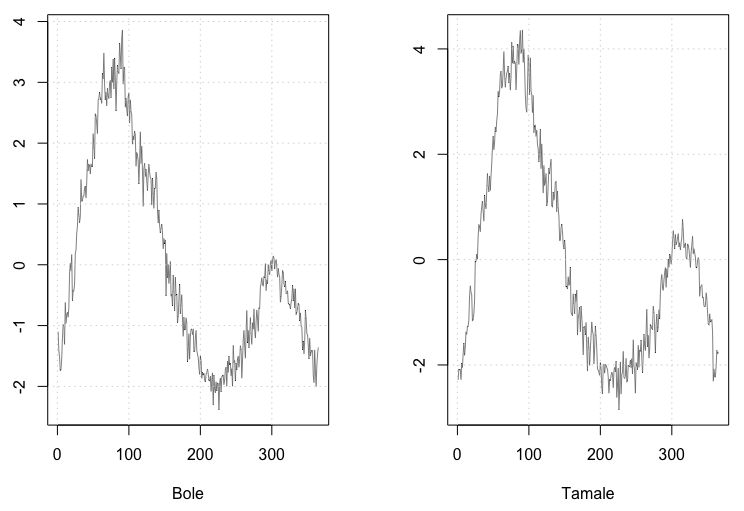}
	\caption{Estimated seasonal figure of the daily average temperature for Bole and Tamale.}
	\label{dat_est_seasonal_fig}
\end{figure}

\subsection{Temperature-Based Weather Derivative Pricing}
\begin{theorem}[Girsanov Theorem]
	Let $B_t$ be a Brownian motion on a probability space $(\Omega, \mathcal{F}, \mathbb{P})$ and $\lambda = \{\lambda_t : 0 \le t \le T \}$ be an adaptive process satisfying the Novikov condition
	\begin{equation}
	\mathbb{E} \left[  \exp\left(\frac{1}{2}  \int_{0}^{t} \lambda^2_u du            \right)       \right]      < \infty.
	\end{equation}
	\begin{equation*}
	\mbox{Let} \qquad Z(t) = \exp \left( \int_{0}^{t} \lambda_u dB_u - \frac{1}{2} \int_{0}^{t}  \lambda^2_u du  \right).
	\label{Novikov}
	\end{equation*}
	Then, $\mathbb{Q} \sim \mathbb{P}$ can be determined by the Radon--Nikodym derivative
	\begin{equation}
	\dfrac{d\mathbb{Q}}{d\mathbb{P}} \mid \mathcal{F}_t = Z(T).
	\end{equation}
	
	Then, the random process		
	\begin{equation}
	\begin{aligned}
	W_t &= B_t - \int_{0}^{t} \lambda_s ds,
	\\
	dW_t &= dB_t - \lambda_tdt,
	\label{girsanov}
	\end{aligned}
	\end{equation}	
	
	is a standard Brownian motion under the measure $\mathbb{Q}^{\lambda}$. 
	\label{theorem_girsanov}
\end{theorem}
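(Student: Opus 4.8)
The plan is to prove this by verifying \emph{L\'evy's characterization} of Brownian motion under the new measure: I would show that $W_t$ is a continuous $\mathbb{Q}^{\lambda}$-martingale started at zero whose quadratic variation satisfies $\langle W \rangle_t = t$, and conclude that $W_t$ must be a standard $\mathbb{Q}^{\lambda}$-Brownian motion. This reduces the whole statement to two ingredients, the martingale property of $W$ under $\mathbb{Q}^{\lambda}$ and the quadratic-variation identity, the second of which is nearly immediate.

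First I would establish that $Z(t)$ is a genuine $\mathbb{P}$-martingale, which is what makes $d\mathbb{Q}/d\mathbb{P} = Z(T)$ a legitimate probability density. Applying It\^o's formula to the exponential defining $Z$ gives $dZ(t) = Z(t)\,\lambda_t\,dB_t$, so $Z$ is a strictly positive local martingale with $Z(0)=1$. The Novikov condition assumed in the statement is precisely the hypothesis that upgrades this local martingale to a true (uniformly integrable) martingale, so that $\mathbb{E}_{\mathbb{P}}[Z(T)] = 1$ and $\mathbb{Q}^{\lambda}$ is a probability measure; since $Z(T) > 0$ almost surely, $\mathbb{Q}^{\lambda} \sim \mathbb{P}$, giving the equivalence claimed via the Radon--Nikodym derivative.

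Next I would show that the product $W_t Z(t)$ is a $\mathbb{P}$-local martingale using the It\^o product rule. With $dW_t = dB_t - \lambda_t\,dt$, $dZ(t) = Z(t)\lambda_t\,dB_t$, and the cross-variation $d\langle W, Z\rangle_t = Z(t)\lambda_t\,dt$, the finite-variation contributions cancel and one is left with $d\big(W_t Z(t)\big) = \big(Z(t) + W_t Z(t)\lambda_t\big)\,dB_t$, which has no drift. The Bayes rule for conditional expectation under a change of measure, $\mathbb{E}_{\mathbb{Q}^{\lambda}}[\,\cdot \mid \mathcal{F}_s] = Z(s)^{-1}\,\mathbb{E}_{\mathbb{P}}[\,\cdot\, Z(t) \mid \mathcal{F}_s]$, then transfers the $\mathbb{P}$-martingale property of $W_t Z(t)$ into the $\mathbb{Q}^{\lambda}$-martingale property of $W_t$. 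The quadratic-variation claim is free: because $W_t$ differs from $B_t$ only by the finite-variation term $\int_0^t \lambda_s\,ds$, we have $\langle W\rangle_t = \langle B\rangle_t = t$.

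I expect the main obstacle to be the passage from the \emph{local} martingale property to the genuine martingale property needed to legitimately invoke L\'evy's theorem. The It\^o-calculus computations only guarantee that $Z$ and $W_t Z(t)$ are local martingales, so the argument requires uniform-integrability control that justifies localizing, taking stopping-time limits, and interchanging those limits with expectations; this is exactly where the Novikov condition does the heavy lifting. Once the true $\mathbb{Q}^{\lambda}$-martingale property of $W$ and the identity $\langle W\rangle_t = t$ are both secured, L\'evy's characterization closes the proof and identifies $W_t$ as a standard $\mathbb{Q}^{\lambda}$-Brownian motion.
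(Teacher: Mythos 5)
The paper itself offers no proof of this theorem: Girsanov's theorem is stated as a classical, black-box result, and the only accompanying commentary is the remark immediately following it, which notes that the Novikov condition guarantees $Z$ is a positive martingale with $\mathbb{E}(Z)=1$. Your sketch is therefore not competing with an argument in the paper but reconstructing the standard textbook proof, and it does so correctly: It\^{o}'s formula gives $dZ(t)=Z(t)\lambda_t\,dB_t$, so $Z$ is a positive local martingale; Novikov upgrades it to a uniformly integrable martingale, making $\mathbb{Q}^{\lambda}$ a probability measure equivalent to $\mathbb{P}$; the product rule computation $d\big(W_tZ(t)\big)=\big(Z(t)+W_tZ(t)\lambda_t\big)\,dB_t$ is right, as is the use of the abstract Bayes rule to transfer the martingale property of $WZ$ under $\mathbb{P}$ to that of $W$ under $\mathbb{Q}^{\lambda}$; and $\langle W\rangle_t=\langle B\rangle_t=t$ because $W-B$ has finite variation, so L\'evy's characterization closes the argument.

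One refinement to your closing worry: the passage from the \emph{local} to the \emph{true} martingale property of $W$ is not actually needed, because L\'evy's characterization already holds for continuous local martingales --- a continuous local martingale vanishing at zero with quadratic variation $\langle W\rangle_t=t$ is a standard Brownian motion. So the Bayes-rule step only needs to be carried out along a localizing sequence of stopping times, with no uniform-integrability control on $W$ itself. The one place where integrability genuinely must be secured is the martingale property of $Z$ (otherwise $\mathbb{Q}^{\lambda}$ fails to be a probability measure), and that is precisely what the Novikov hypothesis delivers; your identification of Novikov as the load-bearing assumption is correct, but its weight falls entirely on $Z$, not on $WZ$.
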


\begin{remark}
	The Novikov condition in the Girsanov theorem ensures that $Z$ is positive martingale whenever $\mathbb{E}(Z) = 1$. This is referred to as the Radon--Nikodym derivative.
	\label{remark_novikov_condition}
\end{remark}

\begin{remark}\label{remark_MPR}
	$\lambda$ is refered to as the market price of risk (MPR). As there is no real weather derivative market in Africa from which the prices can be obtained, $\lambda$ is assumed to be a constant. For a constant $\lambda$, Equation (\ref{girsanov}) can be re-defined as
	\begin{equation}\label{girsanov_constant}
	dW_t = dB_t - \lambda dt.
	\end{equation}
\end{remark}  	
\begin{lemma}
	If the daily average temperature follows the proposed model in Equation (\ref{proposed_temperature_model}), then the explicit solution is given by 
	\begin{equation}
	T_t = S_t + (T_0 - S_0) e^{\int_{0}^{t}\beta_sds} + e^{\int_{0}^{t}\beta_sds} \int_{0}^{t} \sigma T_u e^{\int_{0}^{t}\beta_sds} dB_u.
	\label{lemma_explicit_proposed_temperature_model}
	\end{equation}
\end{lemma}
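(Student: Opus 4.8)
The plan is to treat Equation~(\ref{proposed_temperature_model}) as a linear stochastic differential equation for the deviation of the temperature from its seasonal mean, and then to remove the drift by multiplying through by a deterministic integrating factor, exactly as one solves an Ornstein--Uhlenbeck equation but now with a time-varying mean-reversion rate $\beta(t)$. Since the only ingredients are the change of variable and Itô's product rule, no appeal to the Girsanov machinery of Theorem~\ref{theorem_girsanov} is needed here; that is reserved for the later change of measure.

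First I would introduce the centered process $X_t := T_t - S_t$. Subtracting $dS_t$ from both sides of Equation~(\ref{proposed_temperature_model}) gives
\begin{equation*}
dX_t = \beta(t)\,X_t\,dt + \sigma T_t\,dB_t,
\end{equation*}
so that $X$ obeys a linear SDE with drift coefficient $\beta(t)$, and we temporarily regard the diffusion term $\sigma T_t\,dB_t$ as an inhomogeneous driving term. I would then multiply by the integrating factor $\mu_t := e^{-\int_0^t \beta_s\,ds}$, which is deterministic and of bounded variation, so that $d\mu_t = -\beta(t)\,\mu_t\,dt$ and the cross-variation $\langle \mu, X\rangle_t$ vanishes identically.

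Applying the Itô product rule to $Y_t := \mu_t X_t$, the two drift contributions $\mu_t\beta(t)X_t\,dt$ and $-\beta(t)\mu_t X_t\,dt$ cancel, leaving
\begin{equation*}
d(\mu_t X_t) = \mu_t\,\sigma T_t\,dB_t.
\end{equation*}
Integrating from $0$ to $t$ and using $\mu_0 = 1$ yields $\mu_t X_t = X_0 + \int_0^t \mu_u\,\sigma T_u\,dB_u$. Dividing by $\mu_t$, replacing $\mu_t^{-1}$ by $e^{\int_0^t \beta_s\,ds}$ and $X_0$ by $T_0 - S_0$, and finally restoring $T_t = X_t + S_t$, produces the claimed representation. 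I should note that what the cancellation actually delivers inside the stochastic integral is the factor $e^{-\int_0^u \beta_s\,ds}$ evaluated at the integration variable $u$, i.e. the factored form $e^{\int_0^t\beta_s\,ds}\int_0^t \sigma T_u\,e^{-\int_0^u \beta_s\,ds}\,dB_u$.

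The step requiring the most care is the treatment of the diffusion term. Because the volatility $\sigma T_t$ is proportional to the state $T_t$ itself, the integrand $\sigma T_u$ is not a known deterministic function of time but the unknown process evaluated at $u$; consequently the formula is an implicit (Volterra-type) fixed-point characterization of $T$ rather than a genuinely closed-form solution. I would therefore make explicit that the integrating-factor computation shows only that any solution \emph{must} satisfy this representation, while the existence and pathwise uniqueness of the solution $T$ follow separately from the global Lipschitz and linear-growth properties of the coefficients $x \mapsto \beta(t)(x - S_t)$ and $x \mapsto \sigma x$ via the standard existence-and-uniqueness theorem for SDEs, so that the representation characterizes that unique solution.
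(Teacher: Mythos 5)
Your proposal is correct and takes essentially the same route as the paper: the paper applies It\^{o}'s lemma to $F(\tilde{T}_t,t)=\tilde{T}_t e^{-\int_0^t\beta_s\,ds}$, which is precisely your integrating-factor/product-rule computation on $\mu_t X_t$, followed by the same integration and rearrangement. Both you and the paper's own proof arrive at the integrand $e^{-\int_0^u\beta_s\,ds}$ evaluated at the integration variable, confirming that the exponent $e^{\int_0^t\beta_s\,ds}$ in the displayed lemma statement is a typo; your additional observations (that the representation is implicit because $\sigma T_u$ appears inside the stochastic integral, and that uniqueness rests on the standard Lipschitz/linear-growth theorem) are sound refinements the paper omits but do not alter the argument.
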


\begin{proof}
	We have that
	\begin{equation*}
	dT_t = dS_t + \beta_t (T_t - S_t)dt + \sigma T_tdB_t
	\end{equation*}
	\begin{equation}
	d\tilde{T}_t = \beta_t \tilde{T}_t + \sigma T_tdB_t, 
	\label{proof_temperature_model_1}
	\end{equation}
	where $\tilde{T} = T_t - S_t$. Using the transformation below, $d\tilde{T}_t$ can be evaluated, 
	\begin{eqnarray*}
		F(\tilde{T}_t, t) = \tilde{T}_t e^{-\int_{0}^{t}\beta_sds} \end{eqnarray*}
	$$
	\dfrac{\partial F}{\partial \tilde{T}_t} = e^{-\int_{0}^{t}\beta_sds}; \qquad \dfrac{\partial^2 F}{\partial \tilde{T}_t^2} = 0; \qquad \dfrac{\partial F}{\partial t} = -\beta_t \tilde{T}_t e^{-\int_{0}^{t}\beta_sds}.
	$$
	
	Applying It\^{o}'s Lemma and Equation (\ref{proof_temperature_model_1}), 
	\begin{equation}
	dF_t = \sigma T_t e^{-\int_{0}^{t}\beta_sds} dB_t.
	\label{proof_temperature_model_2}
	\end{equation}
	
	Integrating Equation (\ref{proof_temperature_model_2}) over the interval $[0,t]$,
	\begin{equation*}
	\begin{aligned}
	F_t &= F_0 + \int_{0}^{t} \sigma T_u e^{-\int_{0}^{u}\beta_sds} dB_u \\
	\tilde{T}_t e^{-\int_{0}^{t}\beta_sds} &= \tilde{T}_0 + \int_{0}^{t} \sigma T_u e^{-\int_{0}^{u}\beta_sds} dB_u \\
	\tilde{T}_t &= \tilde{T}_0 e^{\int_{0}^{t}\beta_sds} + e^{\int_{0}^{t}\beta_sds} \int_{0}^{t} \sigma T_u e^{-\int_{0}^{u}\beta_sds} dB_u \\
	T_t &= S_t + (T_0 - S_0) e^{\int_{0}^{t}\beta_sds} + e^{\int_{0}^{t}\beta_sds} \int_{0}^{t} \sigma T_u e^{-\int_{0}^{u}\beta_sds} dB_u.
	\end{aligned}
	\end{equation*}
\end{proof}

\begin{lemma}\label{lemma_explicit_girsanov_temperature_model}
	Under the risk-neutral measure $\mathbb{Q}$, the explicit solution of the daily average temperature model is given by
	\begin{equation}
	T_t = S_t + (T_0 -S_0) e^{\int_{0}^{t}\beta_sds} +
	\int_{0}^{t} \sigma \lambda T_u e^{\int_{u}^{t}\beta_sds} du + \int_{0}^{t} \sigma T_u e^{\int_{u}^{t}\beta_sds} dW_u.
	\end{equation}
\end{lemma}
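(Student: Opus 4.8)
The plan is to reduce this to the previous lemma by changing measure first and then reusing the integrating-factor computation almost verbatim. Since Remark \ref{remark_MPR} takes the market price of risk $\lambda$ to be constant, Girsanov's theorem (Theorem \ref{theorem_girsanov}) supplies a $\mathbb{Q}$-Brownian motion $W_t$ with $dB_t = dW_t + \lambda\, dt$. Substituting this relation into the proposed model (Equation \ref{proposed_temperature_model}) moves the $\sigma \lambda T_t\, dt$ contribution into the drift, giving
\begin{equation*}
dT_t = dS_t + \big[\beta_t (T_t - S_t) + \sigma \lambda T_t\big]\, dt + \sigma T_t\, dW_t.
\end{equation*}

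First I would pass to $\tilde T_t = T_t - S_t$, so that $d\tilde T_t = \beta_t \tilde T_t\, dt + \sigma \lambda T_t\, dt + \sigma T_t\, dW_t$. This is the same SDE as in the previous lemma apart from the extra first-order term $\sigma \lambda T_t\, dt$, which does not enter the It\^{o} correction.

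Next I would apply the identical transformation $F(\tilde T_t, t) = \tilde T_t e^{-\int_0^t \beta_s ds}$. Because its partials are unchanged from before ($\partial_{\tilde T} F = e^{-\int_0^t \beta_s ds}$, $\partial^2_{\tilde T} F = 0$, and $\partial_t F = -\beta_t \tilde T_t e^{-\int_0^t \beta_s ds}$), It\^{o}'s lemma cancels the $\beta_t \tilde T_t$ drift and leaves
\begin{equation*}
dF_t = \sigma \lambda T_t e^{-\int_0^t \beta_s ds}\, dt + \sigma T_t e^{-\int_0^t \beta_s ds}\, dW_t.
\end{equation*}
Integrating over $[0,t]$ with $F_0 = \tilde T_0 = T_0 - S_0$ and then multiplying through by $e^{\int_0^t \beta_s ds}$ reproduces the claimed expression, once the kernels are combined via $e^{\int_0^t \beta_s ds}\, e^{-\int_0^u \beta_s ds} = e^{\int_u^t \beta_s ds}$ so that both the Lebesgue and the stochastic integral carry $e^{\int_u^t \beta_s ds}$.

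I do not anticipate a genuine obstacle here: the argument is a routine replay of the preceding lemma, the only new ingredient being the Girsanov drift shift. The points needing care are purely clerical, namely tracking the exponential kernels so the two integrals match the stated form, and recording that the constant-$\lambda$ Novikov condition holds so that the equivalent measure $\mathbb{Q}$ is legitimately defined.
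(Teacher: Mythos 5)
Your proposal is correct and is essentially the paper's own argument: the paper's proof consists precisely of substituting the Girsanov relation $dW_t = dB_t - \lambda\,dt$ (Equation \ref{girsanov_constant}) into the de-seasonalized SDE and repeating the integrating-factor steps of Lemma \ref{lemma_explicit_proposed_temperature_model}, which is exactly what you carry out, only with the details (the extra $\sigma\lambda T_t\,dt$ drift term, the unchanged It\^{o} computation, and the kernel combination $e^{\int_0^t\beta_s ds}e^{-\int_0^u\beta_s ds}=e^{\int_u^t\beta_s ds}$) written out explicitly.
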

\begin{proof}
	By substituting Equation (\ref{girsanov_constant}) into Equation (\ref{proof_temperature_model_1}) and following the steps of the proof of Lemma \ref{lemma_explicit_proposed_temperature_model}, the lemma can be derived.
\end{proof}

\subsubsection{CAT Futures and Options on Futures}
Suppose that, for a contract period $[t_1, t_2]$, the temperature dynamics follow the TML model. Then, there is a price dynamic of futures written on a CAT index with $t \le t_1 < t_2$. The futures price of CAT is given by
\begin{equation}
0 = e^{-r(t_2-t)}\mathbb{E}_\mathbb{Q} \left[  \int_{t_1}^{t_2}  T_xdx - F_{CAT}(t, t_1, t_2)  \mid \mathcal{F}_t     \right].
\end{equation}

As the future price $F(t, t_1, t_2)$ is $\mathcal{F}_t$ adapted under the measure $\mathbb{Q}$,
\begin{equation}
\label{CATPRICING}
F_{CAT}(t, t_1, t_2) = \mathbb{E}_\mathbb{Q} \left[  \int_{t_1}^{t_2}  T_xdx  \mid \mathcal{F}_t     \right].
\end{equation}

\begin{proposition}
	Suppose the daily average temperature follows Model (\ref{proposed_temperature_model}). Then, the price of CAT-futures at time $t \le t_1 \le t_2$ for the contract period $[t_1, t_2]$ is given by:
	\begin{equation*}
	F_{CAT}(t, t_1, t_2)= \int_{t_1}^{t_2} S_x dx+ \int_{t_1}^{t_2} (T_t - S_t) e^{\int_{t}^{x}\beta_sds}dx + L_1,
	\end{equation*}	
	where 
	$
	L_1 = \int_{t}^{t_1} \int_{t_1}^{t_2} e^{\int_{0}^{x}\beta_sds} \sigma \lambda T_u e^{\int_{u}^{0}\beta_sds} dxdu + \int_{t_2}^{t_1} \int_{t_1}^{t_2} e^{\int_{0}^{x}\beta_sds} \sigma \lambda T_u e^{\int_{u}^{0}\beta_sds} dxdu.
	$
	\label{Proposition_CAT_Futures}
\end{proposition}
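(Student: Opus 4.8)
The plan is to start from the pricing identity (\ref{CATPRICING}), which expresses $F_{CAT}(t,t_1,t_2)$ as the $\mathbb{Q}$-conditional expectation of $\int_{t_1}^{t_2} T_x\,dx$ given $\mathcal{F}_t$, and then push the conditional expectation through the time integral by Fubini's theorem. This reduces the task to computing $\mathbb{E}_\mathbb{Q}[T_x \mid \mathcal{F}_t]$ for each $x\in[t_1,t_2]$ and integrating over $x$. Since $S_x$ is deterministic and $S_t,T_t$ are $\mathcal{F}_t$-measurable, the first two terms of the claimed formula should fall out at once from the first two terms of the explicit solution.

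First I would re-anchor the explicit solution of Lemma \ref{lemma_explicit_girsanov_temperature_model} at the current time $t$ rather than at $0$; by the same integrating-factor computation used in the proof of that lemma, now started from the initial condition at time $t$, one has for $x\ge t$
\begin{equation*}
T_x = S_x + (T_t - S_t)e^{\int_t^x \beta_s\,ds} + \int_t^x \sigma\lambda T_u e^{\int_u^x \beta_s\,ds}\,du + \int_t^x \sigma T_u e^{\int_u^x \beta_s\,ds}\,dW_u.
\end{equation*}
Taking $\mathbb{E}_\mathbb{Q}[\,\cdot\mid\mathcal{F}_t]$ kills the It\^o integral: after factoring out the $x$-dependent $e^{\int_0^x\beta_s ds}$, the remaining integral $\int_t^x \sigma T_u e^{-\int_0^u\beta_s ds}\,dW_u$ is a $\mathbb{Q}$-martingale in its upper limit that vanishes at $x=t$, so its $\mathcal{F}_t$-conditional mean is zero (the Novikov condition in Theorem \ref{theorem_girsanov} guarantees $W$ is a genuine $\mathbb{Q}$-Brownian motion). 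Integrating the surviving drift term over $x\in[t_1,t_2]$ produces a double integral; a second application of Fubini to swap the $u$- and $x$-integrations, together with the identity $e^{\int_0^x\beta_s ds}e^{\int_u^0\beta_s ds}=e^{\int_u^x\beta_s ds}$, rewrites it into the form $L_1$.

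The hard part will be the drift term $\int_t^x \sigma\lambda T_u e^{\int_u^x\beta_s ds}\,du$: because the local volatility $\sigma T_u$ feeds $T_u$ back into the Girsanov drift, the integrand is itself random, and its conditional expectation is $\int_t^x \sigma\lambda\,\mathbb{E}_\mathbb{Q}[T_u\mid\mathcal{F}_t]\,e^{\int_u^x\beta_s ds}\,du$. Writing $m(u):=\mathbb{E}_\mathbb{Q}[T_u\mid\mathcal{F}_t]$, the function $m$ satisfies a linear Volterra integral equation of the second kind rather than being available in closed form, so it cannot simply be replaced by $T_u$ without comment. The proposition as stated retains the pathwise $T_u$ inside $L_1$, so I would either (i) present $L_1$ with $\mathbb{E}_\mathbb{Q}[T_u\mid\mathcal{F}_t]$ in place of $T_u$ and flag this as the honest conditional mean, or (ii) follow the paper's convention of keeping $T_u$ while noting that the two expressions coincide only after the Volterra equation for $m$ is solved. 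Finally I would reconcile the limits in the second summand of $L_1$: my Fubini bookkeeping gives the region $\{t\le u\le t_1,\ t_1\le x\le t_2\}\cup\{t_1\le u\le t_2,\ u\le x\le t_2\}$, so the inner limits of the second integral should read $\int_{t_1}^{t_2}\!\int_u^{t_2}$ rather than the $\int_{t_2}^{t_1}\!\int_{t_1}^{t_2}$ printed, a discrepancy I would want to resolve before finalizing.
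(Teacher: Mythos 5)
Your proposal follows essentially the same route as the paper's own proof: anchor the explicit $\mathbb{Q}$-dynamics at time $t$ (the paper invokes Lemma \ref{lemma_explicit_girsanov_temperature_model} for $x \ge t$ in exactly the form you write), let the conditional expectation kill the It\^o integral, and apply Fubini with the indicator $\mathds{1}_{[t,x]}(u)$ to reorganize the drift double integral into $L_1$. The two discrepancies you flag are genuine defects of the paper, not of your argument. First, the paper's proof does precisely what you warn against: it passes $\mathbb{E}_\mathbb{Q}[\,\cdot\mid\mathcal{F}_t]$ through the random drift term and keeps the pathwise $T_u$, $u>t$, inside $L_1$ without comment, so the stated ``price'' is not even $\mathcal{F}_t$-measurable. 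Second, your Fubini bookkeeping is the correct one: splitting the region $\{t\le u\le x,\ t_1\le x\le t_2\}$ gives $\int_t^{t_1}\int_{t_1}^{t_2}\cdots\,dx\,du+\int_{t_1}^{t_2}\int_u^{t_2}\cdots\,dx\,du$; the proof's second summand $\int_u^{t_2}\int_{t_1}^{t_2}$ (with $u$ appearing in its own outer limit) and the statement's $\int_{t_2}^{t_1}\int_{t_1}^{t_2}$ are both misprints of this.

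One correction to your discussion: the conditional mean $m(x)=\mathbb{E}_\mathbb{Q}[T_x\mid\mathcal{F}_t]$ \emph{is} available in closed form. Taking expectations in the $\mathbb{Q}$-dynamics $dT_u = dS_u + \beta_u(T_u-S_u)\,du + \sigma\lambda T_u\,du + \sigma T_u\,dW_u$ gives the linear ODE $m'(x) - (\beta_x+\sigma\lambda)m(x) = S'(x)-\beta_x S(x)$ with $m(t)=T_t$, whence
\begin{equation*}
m(x) = T_t e^{\int_t^x(\beta_s+\sigma\lambda)\,ds} + \int_t^x \bigl(S'(u)-\beta_u S(u)\bigr)\, e^{\int_u^x(\beta_s+\sigma\lambda)\,ds}\,du .
\end{equation*}
So your option (i) can be carried out explicitly: $F_{CAT}(t,t_1,t_2)=\int_{t_1}^{t_2} m(x)\,dx$ is a closed-form, $\mathcal{F}_t$-measurable expression, which is what the proposition should assert. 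Your Volterra framing is not wrong---the fixed-point equation for $m$ is indeed a linear Volterra equation of the second kind---but its solution is explicit here because the drift is affine in $T$, so the honest version of the result costs nothing extra.
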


\begin{proof}
	From {Equation} (\ref{CATPRICING}), 
	\begin{equation*}
	F_{CAT}(t, t_1, t_2) = \mathbb{E}_{\mathbb{Q}}\bigg[ \int_{t_1}^{t_2} T_x dx \mid \mathcal{F}_t \bigg].
	\end{equation*}
	From Lemma \ref{lemma_explicit_girsanov_temperature_model} and for any time $x \ge t$
	\begin{equation*}
	\begin{aligned}
	T_x &= S_x + (T_t - S_t) e^{\int_{t}^{x}\beta_sds} +
	\int_{t}^{x} \sigma \lambda T_u e^{\int_{u}^{x}\beta_sds} du + \int_{t}^{x} \sigma T_u e^{\int_{u}^{x}\beta_sds} dW_u
	\\
	F_{CAT}(t, t_1, t_2) &= \mathbb{E}_{\mathbb{Q}}\bigg[ \int_{t_1}^{t_2}\bigg(S_x + (T_t - S_t) e^{\int_{t}^{x}\beta_sds} +
	\int_{t}^{x} \sigma \lambda T_u e^{\int_{u}^{x}\beta_sds} du + \int_{t}^{x} \sigma T_u e^{\int_{u}^{x}\beta_sds} dW_u \bigg) dx \bigg| \mathcal{F}_t \bigg]
	\\
	&= \int_{t_1}^{t_2} S_x dx + \int_{t_1}^{t_2} (T_t - S_t) e^{\int_{t}^{x}\beta_sds} dx +
	\int_{t_1}^{t_2} \int_{t}^{x} \sigma \lambda T_u e^{\int_{u}^{x}\beta_sds} du dx
	\\
	&= \int_{t_1}^{t_2} S_x dx + \int_{t_1}^{t_2} (T_t - S_t) e^{\int_{t}^{x}\beta_sds}dx + L_1, 
	\end{aligned}
	\end{equation*}
	where 
	\begin{equation*}
	\begin{aligned}
	L_1 &= \int_{t_1}^{t_2} \int_{t}^{x} \sigma \lambda T_u e^{\int_{u}^{x}\beta_sds} dudx 
	\\
	&= \int_{t_1}^{t_2} \int_{t}^{t_2} \mathds{1}_{[t,x]}(u) \sigma \lambda T_u e^{\int_{u}^{x}\beta_sds} dudx 
	\\
	&=  \int_{t}^{t_2} \int_{t_1}^{t_2} \mathds{1}_{[t,x]}(u) \sigma \lambda T_u e^{\int_{u}^{x}\beta_sds} dxdu 
	\\
	&= \int_{t}^{t_1} \int_{t_1}^{t_2} \mathds{1}_{[t,x]}(u) \sigma \lambda T_u e^{\int_{u}^{x}\beta_sds} dxdu + \int_{t_1}^{t_2} \int_{t_1}^{t_2} \mathds{1}_{[t,x]}(u) \sigma \lambda T_u e^{\int_{u}^{x}\beta_sds} dxdu
	\\
	&= \int_{t}^{t_1} \int_{t_1}^{t_2} e^{\int_{0}^{x}\beta_sds} \sigma \lambda T_u e^{\int_{u}^{0}\beta_sds} dxdu + \int_{u}^{t_2} \int_{t_1}^{t_2} e^{\int_{0}^{x}\beta_sds} \sigma \lambda T_u e^{\int_{u}^{0}\beta_sds} dxdu.
	\end{aligned}
	\end{equation*}
\end{proof}

\begin{proposition}\label{Proposition_CAT_Futures_inperiod}
	At time $t_1 \le t \le t_2$, the in-period (in the contract) valuation of the CAT futures is given by
	\begin{equation*}
	F_{CAT}(t, t_1, t_2) =  \int_{t_1}^{t} T_x dx + \int_{t}^{t_2} S_x dx + \int_{t}^{t_2} (T_t - S_t) e^{\int_{t}^{x}\beta_sds}dx + \int_{u}^{t_2} \int_{t}^{t_2} e^{\int_{0}^{x}\beta_sds} \sigma \lambda T_u e^{\int_{u}^{0}\beta_sds} dxdu.
	\end{equation*}
\end{proposition}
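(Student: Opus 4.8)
The plan is to start from the risk-neutral pricing identity in Equation (\ref{CATPRICING}), namely $F_{CAT}(t,t_1,t_2) = \mathbb{E}_\mathbb{Q}[\int_{t_1}^{t_2} T_x\,dx \mid \mathcal{F}_t]$, and to exploit the fact that the valuation time $t$ now lies \emph{inside} the contract window $[t_1,t_2]$. The decisive move is to split the index integral at $t$, writing $\int_{t_1}^{t_2} T_x\,dx = \int_{t_1}^{t} T_x\,dx + \int_{t}^{t_2} T_x\,dx$. The first piece aggregates temperatures that have already been observed by time $t$: each $T_x$ with $x \le t$ is $\mathcal{F}_x \subseteq \mathcal{F}_t$-measurable, so this piece is $\mathcal{F}_t$-measurable and passes through the conditional expectation unchanged, contributing the realized term $\int_{t_1}^{t} T_x\,dx$. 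This is precisely the term that distinguishes the in-period valuation from the pre-period formula of Proposition \ref{Proposition_CAT_Futures}.

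For the forward piece $\mathbb{E}_\mathbb{Q}[\int_{t}^{t_2} T_x\,dx \mid \mathcal{F}_t]$, I would repeat the computation of Proposition \ref{Proposition_CAT_Futures} almost verbatim, the only change being that the lower integration limit is now $t$ rather than $t_1$ (which is legitimate since $t \ge t_1$). Concretely, I would substitute the explicit solution from Lemma \ref{lemma_explicit_girsanov_temperature_model}, written for a generic $x \ge t$ as $T_x = S_x + (T_t - S_t)e^{\int_t^x \beta_s\,ds} + \int_t^x \sigma\lambda T_u e^{\int_u^x \beta_s\,ds}\,du + \int_t^x \sigma T_u e^{\int_u^x \beta_s\,ds}\,dW_u$, interchange the expectation with the $dx$-integral, and invoke the martingale property of the It\^o integral so that the $dW_u$ term has zero conditional expectation. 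The surviving deterministic terms yield $\int_t^{t_2} S_x\,dx + \int_t^{t_2}(T_t - S_t)e^{\int_t^x \beta_s\,ds}\,dx$ together with the double-integral drift contribution $\int_t^{t_2}\int_t^x \sigma\lambda T_u e^{\int_u^x \beta_s\,ds}\,du\,dx$.

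The last step is to reorganize that drift double integral into the stated form. As in the proof of Proposition \ref{Proposition_CAT_Futures}, I would rewrite the inner limit using the indicator $\mathds{1}_{[t,x]}(u)$, swap the order of integration via Fubini (for fixed $u$ the variable $x$ then ranges over $[u,t_2]$), and split the exponent as $e^{\int_u^x \beta_s\,ds} = e^{\int_0^x \beta_s\,ds}\,e^{\int_u^0 \beta_s\,ds}$, producing the nested integral $\int_u^{t_2}\int_t^{t_2} e^{\int_0^x \beta_s\,ds}\,\sigma\lambda T_u e^{\int_u^0 \beta_s\,ds}\,dx\,du$ that appears in the statement. Adding the realized term from the first paragraph to the forward deterministic terms and this drift term assembles the claimed expression. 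The main obstacle I anticipate is purely bookkeeping: keeping the measurability split clean and tracking the integration limits and the $e^{\int\beta}$ exponent factors correctly through the Fubini interchange—there is no new analytic difficulty beyond what was already handled in Proposition \ref{Proposition_CAT_Futures}.
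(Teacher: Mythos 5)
Your proposal is correct and follows essentially the same route as the paper: both split $\int_{t_1}^{t_2} T_x\,dx$ at the current time $t$, pull the realized (and hence $\mathcal{F}_t$-measurable) piece $\int_{t_1}^{t} T_x\,dx$ out of the conditional expectation, and then evaluate the remaining piece over $[t,t_2]$ exactly as in Proposition \ref{Proposition_CAT_Futures}. The only cosmetic difference is that the paper abbreviates this last step by recognizing the forward piece as $F_{CAT}(t,t,t_2)$ and citing Proposition \ref{Proposition_CAT_Futures} directly, whereas you re-run its computation inline.
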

\begin{proof}
	From the CAT futures price in {Equation} (\ref{CATPRICING}), \begin{equation*}
	\begin{aligned}
	F_{CAT}(t, t_1, t_2) &= \mathbb{E}_{\mathbb{Q}}\bigg[ \int_{t_1}^{t_2} T_x dx \mid \mathcal{F}_t \bigg]
	\\
	&= \mathbb{E}_{\mathbb{Q}}\bigg[ \bigg(\int_{t_1}^{t} T_x dx + \int_{t}^{t_2} T_x dx \bigg)\bigg| \mathcal{F}_t \bigg]
	\\
	&= \int_{t_1}^{t} T_x dx + \mathbb{E}_{\mathbb{Q}}\bigg[  \int_{t}^{t_2} T_x dx \mid \mathcal{F}_t \bigg]
	\\
	&= \int_{t_1}^{t} T_x dx + F_{CAT}(t, t, t_2).
	\end{aligned}
	\end{equation*}
	
	From Proposition \ref{Proposition_CAT_Futures}, 
	\begin{equation*}
	F_{CAT}(t, t_1, t_2) =  \int_{t_1}^{t} T_x dx + \int_{t}^{t_2} S_x dx + \int_{t}^{t_2} (T_t - S_t) e^{\int_{t}^{x}\beta_sds}dx + \int_{u}^{t_2} \int_{t}^{t_2}  e^{\int_{0}^{x}\beta_sds} \sigma \lambda T_u e^{\int_{u}^{0}\beta_sds} dxdu.
	\end{equation*}
\end{proof}

\begin{lemma}
	The dynamics of the CAT futures under the equivalent probability measure $\mathbb{Q}$ and measured over the contract period $[t_1, t_2]$ is given by
	\begin{equation*}
	dF_{CAT}(t,t_1,t_2)	= \Sigma_{CAT}(t,t_1,t_2, T_t) dW_t,
	\end{equation*}
	where 
	\begin{equation*}
	\Sigma_{CAT}(t,t_1,t_2, T_t) = \sigma T_t \int_{t_1}^{t_2}  e^{\int_{t}^{x}\beta_sds} dx.
	\end{equation*}
	\label{lemma_CAT_options}
\end{lemma}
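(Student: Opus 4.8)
The plan is to exploit the fact that, by its very definition in Equation (\ref{CATPRICING}), $F_{CAT}(t,t_1,t_2)=\mathbb{E}_{\mathbb{Q}}\big[\int_{t_1}^{t_2}T_x\,dx\mid\mathcal{F}_t\big]$ is a $\mathbb{Q}$-martingale, so its $t$-dynamics can carry no $dt$ term; the entire task is therefore to identify the $dW_t$ coefficient. Rather than re-deriving everything from the conditional expectation, I would start from the closed-form expression furnished by Proposition \ref{Proposition_CAT_Futures}, valid for $t\le t_1$, and differentiate it in $t$. Writing $I(t):=\int_{t_1}^{t_2}e^{\int_t^x\beta_s\,ds}\,dx$ for brevity, that expression reads $F_{CAT}=\int_{t_1}^{t_2}S_x\,dx+(T_t-S_t)\,I(t)+L_1(t)$, in which the first integral does not depend on $t$ and so drops out immediately.

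For the second term I would apply the It\^o product rule to $(T_t-S_t)\,I(t)$. The factor $I(t)$ is deterministic and of finite variation, and since $\frac{\partial}{\partial t}\int_t^x\beta_s\,ds=-\beta_t$ one gets $dI(t)=-\beta_t I(t)\,dt$; hence the quadratic-covariation term vanishes. For the stochastic factor I would use the $\mathbb{Q}$-dynamics of $\tilde T_t:=T_t-S_t$ obtained by inserting $dB_t=dW_t+\lambda\,dt$ (Remark \ref{remark_MPR}) into Equation (\ref{proof_temperature_model_1}), namely $d\tilde T_t=(\beta_t\tilde T_t+\sigma\lambda T_t)\,dt+\sigma T_t\,dW_t$. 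Multiplying out, the two mean-reversion contributions $-\beta_t\tilde T_t I(t)\,dt$ and $+\beta_t\tilde T_t I(t)\,dt$ cancel, leaving $d\big[(T_t-S_t)I(t)\big]=\sigma\lambda T_t I(t)\,dt+\sigma T_t I(t)\,dW_t$.

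It then remains to differentiate $L_1(t)=\int_{t_1}^{t_2}\int_t^x\sigma\lambda T_u e^{\int_u^x\beta_s\,ds}\,du\,dx$ in its lower limit $t$. Since $t$ enters $L_1$ only through the inner integration limit, this is a pathwise (finite-variation) derivative: by Leibniz's rule the integrand evaluated at $u=t$ produces $dL_1(t)=-\sigma\lambda T_t I(t)\,dt$, a pure drift with no $dW_t$ part. Adding the three pieces, the surviving drifts $\sigma\lambda T_t I(t)\,dt$ and $-\sigma\lambda T_t I(t)\,dt$ annihilate each other, confirming the martingale property, and what is left is $dF_{CAT}(t,t_1,t_2)=\sigma T_t I(t)\,dW_t=\sigma T_t\big(\int_{t_1}^{t_2}e^{\int_t^x\beta_s\,ds}\,dx\big)dW_t$, which is precisely $\Sigma_{CAT}(t,t_1,t_2,T_t)\,dW_t$.

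The step I expect to be the main obstacle is the careful $t$-differentiation of the quantities in which $t$ appears as a lower limit of an inner integral --- both $I(t)$ through $e^{\int_t^x\beta_s\,ds}$ and the double integral $L_1(t)$ --- together with justifying the interchange of this differentiation with the $dx$-integration (a Fubini / stochastic-Fubini argument). Getting the Leibniz boundary terms and their signs right is exactly what forces the drift to cancel; the vanishing of the aggregate drift then serves as the built-in consistency check that the computation respects the martingale property guaranteed by Equation (\ref{CATPRICING}).
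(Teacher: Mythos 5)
Your proposal is correct and follows essentially the same route as the paper: both start from the closed-form expression for $F_{CAT}$ in Proposition \ref{Proposition_CAT_Futures} and identify the diffusion coefficient $\sigma T_t \int_{t_1}^{t_2} e^{\int_{t}^{x}\beta_s\,ds}\,dx$. If anything, yours is the more complete version --- the paper's proof is a bare chain-rule computation $dF_{CAT} = \big(\partial F_{CAT}/\partial T_t\big)\,dT_t$ in which $dT_t$ is silently replaced by its martingale part $\sigma T_t\,dW_t$, whereas you explicitly compute the Leibniz boundary terms coming from $I(t)$ and $L_1(t)$ and verify that all drift contributions cancel, which is exactly the justification (the $\mathbb{Q}$-martingale property of the conditional expectation) that the paper leaves implicit.
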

\begin{proof}
	
	We have that
	\begin{equation*}
	\begin{aligned}
	\dfrac{dF_{CAT}(t,t_1,t_2)}{dT_t} &= \int_{t_1}^{t_2} e^{\int_{t}^{x}\beta_sds} dx 
	\\	
	dF_{CAT}(t,t_1,t_2) &= \int_{t_1}^{t_2} e^{\int_{t}^{x}\beta_sds} dx  dT_t
	\\
	dF_{CAT}(t,t_1,t_2) &=  \sigma T_t \int_{t_1}^{t_2}  e^{\int_{t}^{x}\beta_sds} dx dW_t.
	\end{aligned}
	\end{equation*}
\end{proof}

\begin{proposition}\label{proposition_CAT_options}
	The call option price at exercise time $t_n$ and strike price $C$ is given by 
	\begin{equation*}
	C_{CAT}(t,t_n,t_1,t_2) = e^{-r(t_n - t)} \bigg(\big(F_{CAT}(t,t_1,t_2) - C\big) \Phi (\Delta (t,t_n,t_1,t_2, T_t)) + \Sigma_{t,t_n} \phi(\Delta (t,t_n,t_1,t_2, T_t)) \bigg),
	\end{equation*}
	where 
	\begin{equation*}
	\Delta (t,t_n,t_1,t_2, T_t) = \dfrac{F_{CAT}(t,t_1,t_2) - C}{\sqrt{\Sigma_{t,t_n}^2}}; \qquad \Sigma_{t,t_n}^2 = \int_{t}^{t_n} \Sigma_{CAT}^2(s,t_1,t_2, T_t) ds,
	\end{equation*} $\Phi$ is the cumulative standard normal distribution function, $\phi$ is the standard normal density function, and $\phi(\cdot) = \Phi'(\cdot)$.
\end{proposition}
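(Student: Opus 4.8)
The plan is to price the call as a discounted risk-neutral expectation and then to exploit the Gaussian law of the terminal futures price that is implied by Lemma \ref{lemma_CAT_options}. First I would invoke the standard no-arbitrage valuation of a European call on the CAT futures, written with exercise date $t_n$ and strike $C$:
\begin{equation*}
C_{CAT}(t,t_n,t_1,t_2) = e^{-r(t_n-t)}\,\mathbb{E}_{\mathbb{Q}}\big[\max\big(F_{CAT}(t_n,t_1,t_2)-C,\,0\big)\mid\mathcal{F}_t\big].
\end{equation*}
Everything then reduces to identifying the conditional law of $F_{CAT}(t_n,t_1,t_2)$ given $\mathcal{F}_t$ and carrying out the resulting expectation.

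Next, from Lemma \ref{lemma_CAT_options} the futures price is a driftless It\^{o} process under $\mathbb{Q}$, so integrating its dynamics from $t$ to $t_n$ gives
\begin{equation*}
F_{CAT}(t_n,t_1,t_2)=F_{CAT}(t,t_1,t_2)+\int_t^{t_n}\Sigma_{CAT}(s,t_1,t_2,T_s)\,dW_s.
\end{equation*}
Freezing the local volatility coefficient at its time-$t$ value $T_t$ (consistent with the definition $\Sigma_{t,t_n}^2=\int_t^{t_n}\Sigma_{CAT}^2(s,t_1,t_2,T_t)\,ds$ in the statement), the integrand becomes deterministic given $\mathcal{F}_t$; hence the It\^{o} integral is, conditionally on $\mathcal{F}_t$, a centred Gaussian variable whose variance equals $\Sigma_{t,t_n}^2$ by the It\^{o} isometry. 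Therefore $F_{CAT}(t_n,t_1,t_2)\mid\mathcal{F}_t$ is normal with mean $F_{CAT}(t,t_1,t_2)$ and variance $\Sigma_{t,t_n}^2$, and the futures price being a $\mathbb{Q}$-martingale confirms that the conditional mean is indeed the current futures price.

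Finally I would reduce the problem to a Gaussian integral. Writing $m := F_{CAT}(t,t_1,t_2)$, $v := \sqrt{\Sigma_{t,t_n}^2}$, and $X := F_{CAT}(t_n,t_1,t_2)$, I evaluate $\mathbb{E}[\max(X-C,0)]$ for $X\sim\mathcal{N}(m,v^2)$ by the substitution $X=m+v\,z$ with $z$ standard normal. Splitting the integral at $z=-\Delta$, where $\Delta=(m-C)/v$ matches the definition of $\Delta(t,t_n,t_1,t_2,T_t)$, the constant part integrates to $(m-C)\,\Phi(\Delta)$ and the $v\,z$ part integrates to $v\,\phi(\Delta)$ using $\int z\,\phi(z)\,dz=-\phi(z)$ together with the symmetry of $\phi$. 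Reinstating the discount factor $e^{-r(t_n-t)}$ yields exactly the stated Bachelier-type formula.

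The main obstacle is conceptual rather than computational: the model volatility $\sigma T_t$ is state dependent, so the It\^{o} integral above is not exactly Gaussian unless the coefficient $T_s$ is frozen at $T_t$ over $[t,t_n]$. The derivation therefore rests on this freezing approximation, which is precisely what is encoded in the definition of $\Sigma_{t,t_n}^2$; checking that $\int_t^{t_n}\Sigma_{CAT}^2(s,t_1,t_2,T_t)\,ds<\infty$ (so the It\^{o} isometry and the martingale property are legitimate) is the one point needing care. Once conditional normality is granted, the remaining step is the routine Gaussian call computation.
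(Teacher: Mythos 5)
Your proposal is correct and follows essentially the same route as the paper's own proof: write the call as a discounted risk-neutral expectation, use Lemma \ref{lemma_CAT_options} to conclude that $F_{CAT}(t_n,t_1,t_2)$ given $\mathcal{F}_t$ is Gaussian with mean $F_{CAT}(t,t_1,t_2)$ and variance $\Sigma_{t,t_n}^2$, and then evaluate the Gaussian payoff integral to obtain the Bachelier-type formula. If anything, you are more careful than the paper, which silently writes $\Sigma_{CAT}(s,t_1,t_2,T_t)$ (with the volatility frozen at $T_t$) inside the stochastic integral without acknowledging, as you do, that this freezing is an approximation required for exact conditional normality.
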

\begin{proof}
	By definition, a call option price at exercise time $t_n$ and strike price $C$ is given as
	\begin{equation*}
	C_{CAT}(t,t_n,t_1,t_2) = e^{-r(t_n - t)} \mathbb{E}_\mathbb{Q}\big[ \max \big(F_{CAT}(t_n,t_1,t_2) - C, 0\big) \big| \mathcal{F}_t \big].
	\end{equation*}
	From Lemma \ref{lemma_CAT_options}, 
	\begin{equation*}
	\begin{aligned}
	\int_{t}^{t_n} dF_{CAT}(s,t_1,t_2)	&= \int_{t}^{t_n} \Sigma_{CAT}(s,t_1,t_2, T_t) dW_s,
	\\
	F_{CAT}(t_n,t_1,t_2) &= F_{CAT}(t,t_1,t_2) + \int_{t}^{t_n} \Sigma_{CAT}(s,t_1,t_2, T_t) dW_s.
	\end{aligned}
	\end{equation*}
	$F_{CAT}(t_n,t_1,t_2)$ is normally distributed under $\mathbb{Q}^{\lambda}$, with expectation 
	\begin{equation*}
	\mathbb{E}_{\lambda}[F_{CAT}(t_n,t_1,t_2)] = F_{CAT}(t,t_1,t_2)
	\end{equation*}
	and variance 
	\begin{equation*}
	Var_{\lambda}[F_{CAT}(t_n,t_1,t_2)] = \int_{t}^{t_n} \Sigma_{CAT}^2(s,t_1,t_2, T_t) ds = \Sigma_{t,t_n}.
	\end{equation*}
	Hence, 
	\begin{equation*}
	\begin{aligned}
	C_{CAT}(t,t_n,t_1,t_2) =& e^{-r(t_n - t)} \mathbb{E}_\mathbb{Q}\big[ \max \big(F_{CAT}(t_n,t_1,t_2) - C, 0\big) \big| \mathcal{F}_t \big]
	\\
	=& e^{-r(t_n - t)} \int_{C}^{\infty} (y - C) f_{CAT} (y) dy
	\\
	=& e^{-r(t_n - t)} \bigg(\big(F_{CAT}(t,t_1,t_2) - C\big) \Phi (\Delta (t,t_n,t_1,t_2, T_t)) + \Sigma_{t,t_n} \phi(\Delta (t,t_n,t_1,t_2, T_t)) \bigg),
	\end{aligned}
	\end{equation*}
	where $\Delta (t,t_n,t_1,t_2, T_t) = \dfrac{F_{CAT}(t,t_1,t_2) - C}{\sqrt{\Sigma_{t,t_n}^2}}$.
\end{proof}

\subsubsection{GDD Futures and Options on Futures}
Similar to the definition of the CAT future price, the GDD future price is given as
\begin{equation*}
0 = e^{-r(t_2-t)}\mathbb{E}_\mathbb{Q} \left[  \int_{t_1}^{t_2} T_xdx - F_{GDD}(t, t_1, t_2)  \biggr\rvert  \mathcal{F}_t     \right].
\label{future_price_GDD}
\end{equation*}
Using the same idea in deriving the CAT futures price, the price of the GDD-futures can be derived as
\begin{equation}
F_{GDD}(t, t_1, t_2) = \mathbb{E}_\mathbb{Q} \left[ \int_{t_1}^{t_2} \max \left(T_x - T^{optimal}, 0 \right)dx \biggr\rvert  \mathcal{F}_t  \right],
\label{GDDPRICING}
\end{equation}
\noindent
where $T^{optimal}$ is the optimal normal temperature at which a crop will develop. 

\begin{proposition}\label{proposition_GDD_futures}
	Suppose the daily average temperature follows Model (\ref{proposed_temperature_model}). Then, the price of GDD-futures at time $t \le t_1 \le t_2$ for the contract period $[t_1, t_2]$ is given by:
	\begin{equation*}
	We have that
	F_{GDD}(t,t_1,t_2) = \int_{t_1}^{t_2} \Psi(t,x) \bigg[ \phi\big({\Delta(t,x)}\big) +  \Delta(t,x)\Phi\big({\Delta(t,x)}\big),
	\bigg]
	\end{equation*}
	where 
	\begin{equation*}
	\Psi^2(t,x)= \int_{t}^{x} \sigma^2 T_u^2 e^{2 \int_{u}^{x}\beta_sds}du; \qquad \Delta(t,x) = \dfrac{S_x + (T_t -S_t) e^{\int_{t}^{x}\beta_sds} +
		\int_{t}^{x} \sigma \lambda T_u e^{\int_{u}^{x}\beta_sds} du - C}{\Psi(t,x)}.
	\end{equation*}
\end{proposition}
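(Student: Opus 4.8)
The plan is to mirror the structure of the CAT-futures derivation (Proposition \ref{Proposition_CAT_Futures} and Lemma \ref{lemma_CAT_options}), but now to accommodate the nonlinear payoff $\max(T_x - C, 0)$, where $C = T^{optimal}$ is the threshold temperature. First I would invoke Fubini's theorem to interchange the conditional expectation with the outer integral over the contract period, rewriting Equation (\ref{GDDPRICING}) as
\begin{equation*}
F_{GDD}(t,t_1,t_2) = \int_{t_1}^{t_2} \mathbb{E}_{\mathbb{Q}}\big[ \max(T_x - C, 0) \mid \mathcal{F}_t \big]\, dx,
\end{equation*}
so the problem reduces to evaluating a single pointwise expectation for each fixed $x \in [t_1, t_2]$.

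The next step is to identify the conditional law of $T_x$ given $\mathcal{F}_t$. Starting from the explicit solution in Lemma \ref{lemma_explicit_girsanov_temperature_model}, I would split $T_x$ into its $\mathcal{F}_t$-measurable drift part
\begin{equation*}
m(t,x) = S_x + (T_t - S_t) e^{\int_{t}^{x}\beta_sds} + \int_{t}^{x} \sigma \lambda T_u e^{\int_{u}^{x}\beta_sds}\, du
\end{equation*}
and the stochastic integral $\int_{t}^{x} \sigma T_u e^{\int_{u}^{x}\beta_sds}\, dW_u$. Treating that integrand as deterministic and applying the It\^{o} isometry then yields the conditional variance
\begin{equation*}
\Psi^2(t,x) = \int_{t}^{x} \sigma^2 T_u^2 e^{2\int_{u}^{x}\beta_sds}\, du,
\end{equation*}
so that $T_x \mid \mathcal{F}_t \sim \mathcal{N}\big(m(t,x), \Psi^2(t,x)\big)$, and $\Delta(t,x) = (m(t,x) - C)/\Psi(t,x)$ is precisely the standardized distance of the conditional mean from the strike, matching the statement.

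The final step is the standard Gaussian call-expectation identity: for $Z \sim \mathcal{N}(m, \Psi^2)$ one has $\mathbb{E}[\max(Z - C, 0)] = (m - C)\Phi\big((m-C)/\Psi\big) + \Psi\, \phi\big((m-C)/\Psi\big)$, which I would verify by substituting $Z = m + \Psi\varepsilon$ with $\varepsilon \sim \mathcal{N}(0,1)$, integrating over $\{\varepsilon > -\Delta\}$, and using $\phi'(\xi) = -\xi\phi(\xi)$ together with $\phi = \Phi'$. Substituting $m - C = \Psi\,\Delta$ collapses the result to $\Psi(t,x)\big[\phi(\Delta(t,x)) + \Delta(t,x)\Phi(\Delta(t,x))\big]$, and integrating over $x \in [t_1,t_2]$ recovers the claimed formula.

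The main obstacle is the conditional-Gaussianity claim in the second step. Because the volatility in Model (\ref{proposed_temperature_model}) is the \emph{state-dependent} term $\sigma T(t)$, the integrand $\sigma T_u e^{\int_u^x \beta_s ds}$ is genuinely random rather than deterministic, so strictly speaking $T_x$ is not exactly Gaussian and the It\^{o} isometry does not directly produce a deterministic variance. The derivation implicitly freezes $T_u$ (or treats the local-volatility coefficient as a known, slowly-varying quantity along the path), which is the standard simplification adopted for analytic tractability in this literature; I would flag this as the one step that invokes a conditional-Gaussian approximation, with everything downstream being exact given that assumption.
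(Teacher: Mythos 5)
Your proposal follows essentially the same route as the paper's own proof: reduce via Fubini to a pointwise conditional expectation, use Lemma \ref{lemma_explicit_girsanov_temperature_model} to split $T_x$ into an $\mathcal{F}_t$-measurable drift $m(t,x)$ plus the stochastic integral, treat that integral as Gaussian with variance $\Psi^2(t,x)=\int_t^x \sigma^2 T_u^2 e^{2\int_u^x \beta_s ds}\,du$, and conclude with the truncated-Gaussian call identity $\Psi\left[\phi(\Delta)+\Delta\Phi(\Delta)\right]$, which is exactly the tail-integral computation the paper carries out by hand. The conditional-Gaussianity caveat you flag is genuine---the paper silently treats the state-dependent integrand $\sigma T_u e^{\int_u^x \beta_s ds}$ as deterministic when invoking the It\^{o} isometry---so your write-up is, if anything, more careful than the paper's about the one approximation on which both arguments rest.
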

\begin{proof}
	By definition.
	\begin{equation}
	\begin{aligned}
	GDD &= \int_{t_1}^{t_2} \max (T_x -C, 0)dx,
	\\
	F_{GDD}(t, t_1, t_2) &= \mathbb{E}_\mathbb{Q}\bigg[ \int_{t_1}^{t_2}\max \big(T_x - C, 0\big) dx \bigg| \mathcal{F}_t \bigg].
	\end{aligned}
	\label{proof_GDD_*}
	\end{equation}
	Recall, from Lemma \ref{lemma_explicit_girsanov_temperature_model} and for any time $x \ge t$,
	\begin{equation}
	T_x = S_x + (T_t - S_t) e^{\int_{t}^{x}\beta_sds} +
	\int_{t}^{x} \sigma \lambda T_u e^{\int_{u}^{x}\beta_sds} du + \int_{t}^{x} \sigma T_u e^{\int_{u}^{x}\beta_sds} dW_u,
	\label{proof_GDD_1}
	\end{equation}
	\begin{equation}
	T_x = D_x = A(t,x) + B(t,x),
	\label{proof_GDD_2}
	\end{equation}
	where 
	\begin{equation*}
	D(t,x) = S_x + (T_t - S_t) e^{\int_{t}^{x}\beta_sds} +
	\int_{t}^{x} \sigma \lambda T_u e^{\int_{u}^{x}\beta_sds} du, \,\,\,\mbox{and}\,\,\, B(t,x) = \int_{t}^{x} \sigma T_u e^{\int_{u}^{x}\beta_sds} dW_u.
	\end{equation*}
	The distribution of $D_x$ can be determined:
	\\
	$A(t,x)$ is deterministic and, hence,
	\begin{equation*}
	B(t,x) \sim N \bigg(0, \int_{t}^{x} \sigma^2 T_u^2 e^{2 \int_{u}^{x}\beta_sds} du \bigg) = N \bigg(0, \Psi^2(t,x) \bigg).
	\end{equation*}
	It follows, from Equation (\ref{proof_GDD_2}), that 
	\begin{equation*}
	D_x \sim N\bigg( A(t,x), \Psi^2(t,x) \bigg).
	\end{equation*}
	Thus, $D_x$ can be written, in terms of the standard normal variable $Z \sim N(0,1)$, as  
	\begin{equation}
	D(t,x) = A(t,x) + \big(\Psi^2(t,x)\big)^{\frac{1}{2}}Z.
	\label{proof_GDD_3}
	\end{equation}
	Consider 
	\begin{equation*}
	T_x - C > 0.
	\end{equation*}
	This requires 
	\begin{equation*}
	\big(\Psi^2(t,x)\big)^{\frac{1}{2}}Z > C - A(t,x),
	\end{equation*}
	\begin{equation}
	Z > \dfrac{C - A(t,x)}{\big(\Psi^2(t,x)\big)^{\frac{1}{2}}Z} =  \Delta_1(t,x) .
	\label{proof_GDD_4}
	\end{equation}
	From Equation (\ref{proof_GDD_4}), 
	\begin{equation}
	C = A(t,x)  + \Delta(t,x) \big(\Psi^2(t,x)\big)^{\frac{1}{2}}Z.
	\label{proof_GDD_5}
	\end{equation}
	From Equations (\ref{proof_GDD_*}) and (\ref{proof_GDD_4}),
	\begin{equation}
	\mathbb{E}_\mathbb{Q}\bigg[ \int_{t_1}^{t_2}\max \big(T_x - C, 0\big) dx \bigg| \mathcal{F}_t \bigg] = \int_{\Delta(t,x)}^{+\infty} \bigg( D(t,x) - C \bigg) \dfrac{e^{-\frac{1}{2}z^2}}{\sqrt{2\pi}}dz.
	\label{proof_GDD_6}
	\end{equation}
	Substituting Equations (\ref{proof_GDD_3}) and (\ref{proof_GDD_5}) into Equation (\ref{proof_GDD_6}), 
	\begin{equation*}
	\begin{aligned}
	&=  \int_{\Delta(t,x)}^{+\infty} \bigg( A(t,x) + \big(\Psi^2(t,x)\big)^{\frac{1}{2}}Z - A(t,x)  - \Delta(t,x) \big(\Psi^2(t,x)\big)^{\frac{1}{2}}Z \bigg) \dfrac{e^{-\frac{1}{2}z^2}}{\sqrt{2\pi}}dz
	\\
	&=\int_{\Delta(t,x)}^{+\infty} \bigg( \big(\Psi^2(t,x)\big)^{\frac{1}{2}}z - \Delta(t,x) \big(\Psi^2(t,x)\big)^{\frac{1}{2}}z \bigg) \dfrac{e^{-\frac{1}{2}z^2}}{\sqrt{2\pi}}dz
	\\
	&=  \left(\Psi^2(t,x)\right)^{\frac{1}{2}} \left( \int_{\Delta_1(t,x)}^{+\infty} \dfrac{ze^{-\frac{1}{2}z^2}}{\sqrt{2\pi}}dz + \Delta_1(t,x) \Phi(-\Delta_1(t,x)) \right )
	\\
	&= \left(\Psi^2(t,x)\right)^{\frac{1}{2}} \left(\dfrac{e^{-\frac{1}{2}{\Delta(t,x)}^2}}{\sqrt{2\pi}}dz + \Delta(t,x) \Phi(\Delta(t,x)) \right )
	\\
	&= 	\left(\Psi^2(t,x)\right)^{\frac{1}{2}}
	\big[\phi(\Delta(t,x)) + \Delta(t,x) \Phi(\Delta(t,x)) \big],
	,		
	\end{aligned}
	\end{equation*}
	where
	\begin{equation*}
	\Delta(t,x) = -\Delta_1(t,x) = \dfrac{G(t,x) - C}{\left(\Psi^2(t,x)\right)^{\frac{1}{2}}},
	\end{equation*}
	
	\begin{equation}
	\mathbb{E}_\mathbb{Q}\bigg[ \int_{t_1}^{t_2}\max \big(T_x - C, 0\big) dx \bigg| \mathcal{F}_t \bigg] = \Psi(t,x) \bigg[ \phi\big(\Delta(t,x)\big) + \Delta(t,x)\Phi\big(\Delta(t,x)\big)\bigg]
	\label{proof_GDD_7}.
	\end{equation}
	Substituting Equation (\ref{proof_GDD_7}) into Equation (\ref{proof_GDD_*}) gives the Proposition. 
\end{proof}

\begin{lemma}
	The dynamics of the GDD futures under the equivalent probability measure $\mathbb{Q}$ measured over the period $[t_1, t_2]$ are given by
	\begin{equation*}
	dF_{GDD}(t,t_1,t_2) = \varPi_{GDD}(t,t_1,t_2)dW_t, 
	\end{equation*}
	\noindent where $\varPi_{GDD}$ is called the term structure of the GDD futures volatility, 
	\begin{equation*}
	\varPi_{GDD} = \sigma T_t \int_{t_1}^{t_2} e^{\int_{t}^{x}\beta_sds}  \Phi \bigg(\dfrac{h(t,x,e^{\int_{t}^{x}\beta_sds}(T_t -S_t))}{\Psi(t,x)}\bigg) ds; \quad  \Psi^2(t,x)= \int_{t}^{x} \sigma^2 T_u^2 e^{2 \int_{u}^{x}\beta_sds}du;
	\end{equation*}
	\begin{equation*}
	h(t,x,e^{\int_{t}^{x}\beta_sds}(T_t -S_t)) = S_x + (T_t -S_t) e^{\int_{t}^{x}\beta_sds} +
	\int_{t}^{x} \sigma \lambda T_u e^{\int_{u}^{x}\beta_sds} du - C.
	\end{equation*}
\end{lemma}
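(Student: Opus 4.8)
The plan is to follow the same route as Lemma~\ref{lemma_CAT_options}. The futures price $F_{GDD}(t,t_1,t_2)=\mathbb{E}_{\mathbb{Q}}\big[\int_{t_1}^{t_2}\max(T_x-C,0)\,dx\mid\mathcal{F}_t\big]$ is a conditional expectation of an $\mathcal{F}_{t_2}$-measurable payoff and is therefore a $\mathbb{Q}$-martingale; by the martingale representation theorem its differential has no $dt$ term, so the whole task reduces to identifying the diffusion coefficient. Since Proposition~\ref{proposition_GDD_futures} expresses $F_{GDD}$ as a smooth deterministic function of $t$ and the current temperature $T_t$, I would apply It\^o's formula to that closed form and read off the $dW_t$ coefficient as $\partial_{T_t}F_{GDD}$ times the diffusion coefficient of $T_t$, which from Model~(\ref{proposed_temperature_model}) is $\sigma T_t$. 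The drift terms (the explicit $\partial_t$ piece, the mean-reversion drift through $\partial_{T_t}$, and the second-order $\tfrac12\partial_{T_tT_t}$ term) must sum to zero by the martingale property, leaving
\[
dF_{GDD}(t,t_1,t_2)=\sigma T_t\,\frac{\partial F_{GDD}}{\partial T_t}\,dW_t.
\]

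It then remains to compute the single derivative $\partial_{T_t}F_{GDD}$. Writing $F_{GDD}=\int_{t_1}^{t_2}\Psi(t,x)\,g(\Delta(t,x))\,dx$ with $g(y)=\phi(y)+y\Phi(y)$ and $\Delta(t,x)=h(t,x,\cdot)/\Psi(t,x)$, I would differentiate under the integral sign. The computational heart of the argument is the identity $g'(y)=\Phi(y)$: since $\phi'(y)=-y\phi(y)$, one has $g'(y)=\phi'(y)+\Phi(y)+y\phi(y)=\Phi(y)$, so the two $\pm y\phi(y)$ contributions cancel. Using $\partial_{T_t}h=e^{\int_t^x\beta_s\,ds}$ (coming from the $(T_t-S_t)e^{\int_t^x\beta_s\,ds}$ term) and treating $\Psi(t,x)$ as $T_t$-free, the chain rule gives $\partial_{T_t}\big[\Psi\,g(\Delta)\big]=\Psi\,\Phi(\Delta)\,\tfrac{1}{\Psi}e^{\int_t^x\beta_s\,ds}=e^{\int_t^x\beta_s\,ds}\Phi(\Delta(t,x))$. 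Integrating over $[t_1,t_2]$ and multiplying by $\sigma T_t$ reproduces exactly $\varPi_{GDD}$, which completes the identification.

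I expect the main obstacle to be the careful handling of the quantities $\Psi(t,x)$ and the market-price-of-risk integral $\int_t^x\sigma\lambda T_u e^{\int_u^x\beta_s\,ds}\,du$, both of which involve the future path $\{T_u\}_{u\in[t,x]}$ rather than $T_t$ alone; the shortcut above implicitly treats the local volatility as effectively deterministic so that only the $(T_t-S_t)$ term survives the differentiation. A fully rigorous version would either freeze the local volatility or argue that the extra drift contributions they generate are absorbed by the martingale cancellation. Granting this, the clean collapse $g'=\Phi$ is what turns the diffusion coefficient into the $\Phi$-weighted, Black-type term structure $\varPi_{GDD}$.
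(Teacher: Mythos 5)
Your proposal follows essentially the same route as the paper's proof: differentiate the closed-form expression from Proposition~\ref{proposition_GDD_futures} with respect to $T_t$, use the identity $\Upsilon'(y)=\Phi(y)$ for $\Upsilon(y)=\phi(y)+y\Phi(y)$ together with $\partial_{T_t}h=e^{\int_t^x\beta_s\,ds}$ while holding $\Psi(t,x)$ fixed, and multiply by the diffusion coefficient $\sigma T_t$ to obtain $\varPi_{GDD}$. In fact you are somewhat more careful than the paper, which drops the drift terms silently rather than invoking the martingale property, and which never acknowledges the path-dependence of $\Psi(t,x)$ and the market-price-of-risk integral that you correctly flag as the unresolved gap common to both arguments.
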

\begin{proof}
	Let $h(t,x,e^{\int_{t}^{x}\beta_sds}(T_t -S_t)) = S_x + (T_t -S_t) e^{\int_{t}^{x}\beta_sds} +
	\int_{t}^{x} \sigma \lambda T_u e^{\int_{u}^{x}\beta_sds} du - C$
	\begin{equation*}
	\dfrac{dh}{dT_t} =   e^{\int_{t}^{x}\beta_sds}
	\end{equation*}
	From Proposition \ref{proposition_GDD_futures},
	\begin{equation*}
	\begin{aligned}
	\dfrac{dF_{GDD}}{dT_t} &= \int_{t_1}^{t_2} \Psi(t,x) \Upsilon'\bigg(\dfrac{h(t,x,e^{\int_{t}^{x}\beta_sds}(T_t -S_t))}{\Psi(t,x)}\bigg)\dfrac{h'(t,x,e^{\int_{t}^{x}\beta_sds}(T_t -S_t))}{\Psi^2(t,x)}ds
	\\
	&= \int_{t_1}^{t_2} \Upsilon' \bigg(\dfrac{h(t,x,e^{\int_{t}^{x}\beta_sds}(T_t -S_t))}{\Psi(t,x)}\bigg) h'(t,x,e^{\int_{t}^{x}\beta_sds}(T_t -S_t)) ds
	\\
	&= \int_{t_1}^{t_2} e^{\int_{t}^{x}\beta_sds}  \Phi \bigg(\dfrac{h(t,x,e^{\int_{t}^{x}\beta_sds}(T_t -S_t))}{\Psi(t,x)}\bigg) ds
	\\
	dF_{GDD}&= \sigma T_t \int_{t_1}^{t_2} e^{\int_{t}^{x}\beta_sds}  \Phi \bigg(\dfrac{h(t,x,e^{\int_{t}^{x}\beta_sds}(T_t -S_t))}{\Psi(t,x)}\bigg) dsdW_t
	\\
	dF_{GDD} &= \varPi_{GDD}(t,t_1,t_2)dW_t, 
	\end{aligned}
	\end{equation*}
	where 
	\begin{equation*}
	\begin{aligned}
	\Upsilon(\Delta(t,x)) &= \phi(\Delta(t,x)) + \Delta(t,x) \Phi(\Delta(t,x)); \quad \Delta(t,x) = \dfrac{h(t,x,e^{\int_{t}^{x}\beta_sds}(T_t -S_t))}{\Psi(t,x)}; 
	\\
	\varPi_{GDD} &= \sigma T_t \int_{t_1}^{t_2} e^{\int_{t}^{x}\beta_sds}  \Phi \bigg(\dfrac{h(t,x,e^{\int_{t}^{x}\beta_sds}(T_t -S_t))}{\Psi(t,x)}\bigg) ds.
	\end{aligned}
	\end{equation*}
\end{proof}

\begin{proposition}
	For a strike price $C$ and maturity time $t \le t_n \le t_1$, the price of a call option at time $t$ on a GDD futures contract is given by
	\begin{equation*}
	C_{GDD}(t,t_n,t_1,t_2) = e^{-r(t_n - t)} \mathbb{E}_{\mathbb{Q}}\bigg[\max\bigg( \int_{t_1}^{t_2} \Psi(t,x) P(t,x,t_n,(T_t - S_t))ds - C, 0 \bigg)\bigg],
	\end{equation*} 
	where 
	\begin{equation*}
	P(t,x,t_n,(T_t - S_t)) = \tilde{\Upsilon}\bigg( t,x, e^{\int_{t}^{x}\beta_sds}(T_t - S_t) + \int_{t}^{t_n} \lambda \sigma e^{\int_{u}^{x}\beta_sds}du + \Sigma(x, t, t_n)Y \bigg)
	\end{equation*}
	\begin{equation*}
	\tilde{\Upsilon} (t,x,e^{\int_{t}^{x}\beta_sds}(T_t -S_t)) =  \Upsilon \bigg( \dfrac{h(t,x,e^{\int_{t}^{x}\beta_sds}(T_t -S_t))}{\Psi(t,x)} \bigg)
	\end{equation*}
	and 
	\begin{equation*}
	\Sigma(x, t, t_n) = \int_{t}^{t_n} \sigma^2 T_u^2 e^{\int_{u}^{x}\beta_sds} du.
	\end{equation*}
\end{proposition}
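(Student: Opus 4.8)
The plan is to reduce the option price to a single one-dimensional Gaussian expectation, in the same spirit as the CAT call of Proposition \ref{proposition_CAT_options}, but now carrying the nonlinearity of the GDD payoff all the way through. I would begin from the definition of a European call written on the GDD futures,
\begin{equation*}
C_{GDD}(t,t_n,t_1,t_2) = e^{-r(t_n - t)}\,\mathbb{E}_{\mathbb{Q}}\big[\max\big(F_{GDD}(t_n,t_1,t_2) - C, 0\big)\,\big|\,\mathcal{F}_t\big],
\end{equation*}
and substitute the closed form of the GDD futures price from Proposition \ref{proposition_GDD_futures}, writing it with the shorthand $\tilde{\Upsilon}$ as $F_{GDD}(t_n,t_1,t_2) = \int_{t_1}^{t_2}\Psi(t_n,x)\,\tilde{\Upsilon}\big(t_n,x,e^{\int_{t_n}^{x}\beta_s ds}(T_{t_n}-S_{t_n})\big)\,dx$. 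The entire $\mathcal{F}_t$-randomness of this object lives in the single state variable $T_{t_n}$, so the task reduces to expressing the conditional law of $T_{t_n}$.

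Next I would invoke the explicit risk-neutral solution of Lemma \ref{lemma_explicit_girsanov_temperature_model} evaluated at $t_n$, namely
\begin{equation*}
T_{t_n}-S_{t_n} = (T_t-S_t)e^{\int_t^{t_n}\beta_s ds} + \int_t^{t_n}\sigma\lambda T_u e^{\int_u^{t_n}\beta_s ds}\,du + \int_t^{t_n}\sigma T_u e^{\int_u^{t_n}\beta_s ds}\,dW_u,
\end{equation*}
and multiply through by the integrating factor $e^{\int_{t_n}^{x}\beta_s ds}$. Using the semigroup identity $e^{\int_{t_n}^{x}\beta_s ds}\,e^{\int_a^{t_n}\beta_s ds} = e^{\int_a^{x}\beta_s ds}$ for $a\in\{t,u\}$, the state entering $\tilde{\Upsilon}$ collapses into a deterministic part $e^{\int_t^{x}\beta_s ds}(T_t-S_t) + \int_t^{t_n}\lambda\sigma T_u e^{\int_u^{x}\beta_s ds}\,du$ plus the It\^{o} integral $\int_t^{t_n}\sigma T_u e^{\int_u^{x}\beta_s ds}\,dW_u$. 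Treating the local volatility $\sigma T_u$ as frozen exactly as in Proposition \ref{proposition_GDD_futures}, this integral is conditionally centered Gaussian, so I would write it as $\Sigma(x,t,t_n)\,Y$ with $Y\sim N(0,1)$; reading off $P(t,x,t_n,(T_t-S_t))$ as $\tilde{\Upsilon}$ evaluated at this argument then yields the integrand $\Psi(t,x)\,P(t,x,t_n,(T_t-S_t))$ and the claimed formula.

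The step that needs the most care — and the one that actually makes the reduction to a single scalar normal $Y$ legitimate — is the simultaneous handling of the stochastic integrals $\int_t^{t_n}\sigma T_u e^{\int_u^{x}\beta_s ds}\,dW_u$ for every $x\in[t_1,t_2]$. Since they are driven by one Brownian path they form a correlated Gaussian family indexed by $x$ (under the same conditional-Gaussian treatment of $\sigma T_u$ already used in Proposition \ref{proposition_GDD_futures}), and a priori such a family cannot be collapsed to one variable. The resolution is the factorization $e^{\int_u^{x}\beta_s ds} = e^{\int_0^{x}\beta_s ds}e^{-\int_0^{u}\beta_s ds}$, which pulls the $x$-dependent factor $e^{\int_0^{x}\beta_s ds}$ outside the integral and leaves the single, $x$-free random variable $\int_t^{t_n}\sigma T_u e^{-\int_0^{u}\beta_s ds}\,dW_u$. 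The randomness across all maturities $x$ is therefore carried by one standard normal $Y$, with the $x$-dependence absorbed into $\Sigma(x,t,t_n)$; this is precisely why the price can be written as one expectation over $Y$ sitting outside the $dx$-integral rather than term-by-term. Finally, because $\tilde{\Upsilon}$ enters nonlinearly through $\Phi$ and $\phi$, the remaining $\mathbb{E}_{\mathbb{Q}}[\max(\cdots)]$ admits no further closed form — unlike the CAT option of Proposition \ref{proposition_CAT_options} — so I would leave the result as this one-dimensional Gaussian expectation, which is directly amenable to quadrature or Monte Carlo.
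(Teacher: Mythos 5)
Your proposal is correct and takes essentially the same route as the paper's own proof: start from the risk-neutral definition of the call, substitute the GDD futures price of Proposition \ref{proposition_GDD_futures} evaluated at the exercise time $t_n$, and expand $T_{t_n}-S_{t_n}$ through the explicit $\mathbb{Q}$-solution of Lemma \ref{lemma_explicit_girsanov_temperature_model} with the semigroup identity for the integrating factors. You in fact go one step beyond the paper, whose proof stops at the expression still containing the It\^{o} integral $\int_t^{t_n}\sigma T_u e^{\int_u^x\beta_s ds}\,dW_u$ and never justifies collapsing it to $\Sigma(x,t,t_n)Y$ with a single standard normal $Y$; your factorization $e^{\int_u^x\beta_s ds}=e^{\int_0^x\beta_s ds}\,e^{-\int_0^u\beta_s ds}$, showing that the integrals are perfectly correlated across $x$ and driven by one $x$-free Gaussian, supplies exactly that missing step (modulo the freezing of $\sigma T_u$, which you inherit consistently from the paper's own treatment in Proposition \ref{proposition_GDD_futures}, and modulo the fact that your argument would naturally produce the standard deviation $\bigl(\int_t^{t_n}\sigma^2 T_u^2 e^{2\int_u^x\beta_s ds}\,du\bigr)^{1/2}$ rather than the paper's stated $\Sigma$).
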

\begin{proof}
	By definition ,
	\begin{equation*}
	\begin{aligned}
	C_{GDD}(t,t_n,t_1, t_2) &= e^{-r(t_n -t )}\mathbb{E}_\mathbb{Q}\bigg[ \int_{t_1}^{t_2}\max \big(F_{GDD}(t_n,t_1,t_2) - C, 0\big) dx \bigg| \mathcal{F}_t \bigg]
	\\
	F_{GDD}(t_n,t_1, t_2) &= \int_{t_1}^{t_2} \Psi(t,x){\tilde{\Upsilon}}\big(t,x,e^{\int_{t_n}^{x}\beta_sds}(T_{t_n} -S_{t_n})\big)ds,
	\\
	&= \int_{t_1}^{t_2} \Psi(t,x)  {\tilde{\Upsilon}}\bigg(t,x,e^{\int_{t_n}^{x}\beta_sds}(T_t -S_t) + \int_{t}^{t_n} \lambda \sigma T_u e^{\int_{u}^{x}\beta_sds}du + \int_{t}^{t_n} \sigma T_u  e^{\int_{u}^{x}\beta_sds} dW_u\bigg)ds
	\\
	C_{GDD}(t,t_n,t_1, t_2) &= e^{-r(t_n -t )} \mathbb{E}_\mathbb{Q}\bigg[ \int_{t_1}^{t_2}\max \big(\int_{t_1}^{t_2} \Psi(t,x)  {\tilde{\Upsilon}}\bigg(t,x,e^{\int_{t_n}^{x}\beta_sds}(T_t -S_t) + \int_{t}^{t_n} \lambda \sigma T_u e^{\int_{u}^{x}\beta_sds}du + \\
	&\int_{t}^{t_n} \sigma T_u  e^{\int_{u}^{x}\beta_sds} dW_u\bigg)ds - C, 0\big) dx \bigg| \mathcal{F}_t \bigg],
	\end{aligned}
	\end{equation*}
	where
	\begin{equation*}
	\tilde{\Upsilon} (t,x,e^{\int_{t}^{x}\beta_sds}(T_t -S_t)) =  \Upsilon \bigg( \dfrac{h(t,x,e^{\int_{t}^{x}\beta_sds}(T_t -S_t))}{\Psi(t,x)} \bigg).
	\end{equation*}
\end{proof}

\subsubsection{CAT and GDD Futures on Temperature Basket}
\noindent
Assuming $N$ is the spatial locations in the basket, then $\left(\omega_{i}\right)_{i=1}^N$ will be the collection of weights for the spatial locations $\left(y_{i}\right)_{i=1}^N$. The basket of the daily average temperature at the $N$ spatial locations for a given time $t$ is defined as:
\begin{equation}
M(t) := \sum_{i=1}^{N} \omega_{i} {T}_i(t), 
\label{basket_spatial_location}
\end{equation}
where $\sum_{i=1}^{N} \omega_{i} =1$.
\\
[2mm]
Assume the daily average temperature is spatially correlated across the random noise term and the risk-neutral distribution of the daily average temperature for the spatial locations is normally distributed in Model (\ref{proposed_temperature_model}). Hence, the weighted sum of a normally distributed basket is also normally distributed. 
From the above assumptions, a new daily average temperature model for each spatial location $y^i$ can be proposed,  
\begin{equation}
dT_i(t) = dS_i(t) + \beta_i(t) \big(T_i(t) - S_i(t)\big)dt + \sigma T_i(t)dB_i(t).
\label{basket_deseasonalizedmodel}
\end{equation}
Expressing Equation (\ref{basket_deseasonalizedmodel}), for locations $i=1,2,3\cdots,N$, as an $N$-dimensional system leads to the equation below, 
\begin{equation}
d\bm{T}(t) = d\bm{S}(t) + \bm{\beta}(t) \big(\bm{T}(t) - \bm{S}(t)\big)dt + \bm{\sigma} \bm{T}(t)d\bm{B}(t),
\label{basket_Ndimensional_model}
\end{equation}
where $\bm{B}(t) \sim N(0, \bm{\Omega} t)$ and $\bm{\Omega}$ is a covariance matrix. 
Using the linear transformation of multivariate normal distributions property, 
\begin{equation*}
Y \sim N(\mu, \Sigma) \Rightarrow DY \sim N(D\mu, D\Sigma D^T).
\end{equation*}
Suppose $Z \sim N(0, \bm{I}t)$ and $Y = DZ$. Then, it follows that $Y \sim N(0, DD^Tt)$. Applying Cholesky factorization to $\Sigma$, we can derive a lower triangular form for $D$, and $\bm{W}_t$ will be expressed as an $N$-dimensional Brownian motion $\bm{V}_t$, 
\begin{equation}
\bm{B}(t) = \bm{LV}(t).
\label{n_dimensional_Brownian_motion}
\end{equation}
Then, $\bm{LL^T} = \bm{\Omega}$, $\bm{L}$ is a lower triangular matrix with non-negative diagonal entries, $\bm{L^T}$ is an upper triangular matrix, and $\bm{V}(t) = (V_1(t), V_2(t), V_3(t), \cdots, V_N(t))^T$ with $dV_i(t) dV_j(t) = \delta_{ij}dt$. 
Equation~(\ref{basket_Ndimensional_model}) can be reformulated, in terms of $\bm{V}(t)$, as 
\begin{equation}	\label{basket_N_dimensional_model_reformulated}
d\bm{T}(t) = d\bm{S}(t) + \bm{\beta}(t) \big(\bm{T}(t) - \bm{S}(t)\big)dt + \bm{\sigma} \bm{T}(t)\bm{L}d\bm{V}(t).
\end{equation}

\subsubsection{Girsanov's Theorem in $\mathbb{R}^N$}
Let $\bm{V}(t)=\big(V_1(t), V_2(t), V_3(t), \cdots, V_N(t)\big)$ be an $N$-dimensional Brownian motion on a probability space $(\Omega, \mathcal{F}, \mathbb{P})$ and $\bm{\lambda} = \big(\lambda_1(t), \lambda_2(t), \lambda_3(t), \cdots, \lambda_N(t)\big)$ be an $N$-dimensional adapted process on $[0, T]$. 

Define 
\begin{equation}
Z_{\bm{\lambda}}(t) := \exp \bigg( \int_{0}^{t} \bm{\lambda}(s)d\bm{V}(s) - \dfrac{1}{2} \int_{0}^{t} \mid\mid\bm{\lambda}(s)\mid\mid^2 ds \bigg),
\label{spatial_radon_nikodym_1}
\end{equation}
where $\mid\mid\bm{\lambda}(s)\mid\mid^2 = \sum_{i=1}^{N} \lambda_i(s)^2$.

Let 
\begin{equation}
\bm{\tilde{V}}(t) = \bm{V}(t) + \int_{0}^{t}\bm{\lambda}(s)ds.
\label{girsanov_1}
\end{equation}

The component process of $\bm{\tilde{B}}(t)$ is independent under the measure $\mathbb{Q}$.  

Suppose that 
\begin{equation*}
\mathbb{E}\int_{0}^{t}\mid\mid\lambda(s)\mid\mid^2Z(s)^2ds < \infty.
\end{equation*}
Then, $\bm{\tilde{B}}(t)$ is an $N$-dimensional standard Brownian motion under the measure $\mathbb{Q}$, defined as 
\begin{equation}
\dfrac{d\mathbb{Q}}{d\mathbb{P}} \bigg| \mathcal{F}_t = Z(T).
\label{spadial_radon_nikodyn_2}
\end{equation}

Let
\begin{equation}
Z_{\bm{\lambda}}(t) := \exp \bigg(\int_{0}^{t} \big( \bm{\sigma} \bm{T}(t) L \big) \bm{\lambda}(s) d\bm{V}(s) - \dfrac{1}{2} \int_{0}^{t} \mid\mid \bm{\sigma} \bm{T}(s)\bm{L}\mid\mid^{-2} \mid\mid \bm{\lambda}(s)\mid\mid^2 ds\bigg).
\label{spatial_radon_nikodym_3}
\end{equation}

For a constant market price of risk at each geographical reference location in equation (\ref{spatial_radon_nikodym_3}),  it can be deduced that
\begin{equation}
\bm{W}_{\lambda}(t) = \bm{V}(t) - \int_{0}^{t} (\bm{\sigma} \bm{T}(s) \bm{L})^{-1}\bm{\lambda} ds,
\label{spatial_radon_nikodym_4}
\end{equation}
where $\bm{W}_{\lambda}(t)$ is a standard Brownian motion under the measure $\mathbb{Q}$. Hence, we can define the temperature model under the measure $\mathbb{Q}$ as 
\begin{equation}
d\bm{T}(t) = d\bm{S}(t) + \big[\bm{\lambda} + \bm{\beta}(t) \big(\bm{T}(t) - \bm{S}(t)\big)\big]dt + \bm{\sigma} \bm{T}(t)\bm{L}d\bm{W}(t).
\label{spatial_new_temperature_model}
\end{equation}

\subsubsection{Pricing CAT and GDD Futures on Temperature Basket}
\begin{definition}
	For a a specificied contract period, $t\le t_1 < t_2$ and at a spatial location $y_i$, the CAT futures price is defined as in Equation (\ref{Futures_Basket_spatial_CAT}):
	\begin{equation}
	\begin{aligned}
	CAT_M[t_1, t_2] &:= \int_{t_1}^{t_2} M(t)
	\\
	&:= \sum_{i=1}^{N} \omega_{i} \left(\int_{t_1}^{t_2} T_i(x) dx\right).
	\end{aligned}
	\label{Basket_spatial_CAT_definition}
	\end{equation}
\end{definition}
From Equations (\ref{CATPRICING}), (\ref{Basket_spatial_CAT_definition}), and the linearity of expectation, 
\begin{equation}
F_{CAT}(t,t_1, t_2; M) = \sum_{i=1}^{N} \omega_{i} \mathbb{E}_{\mathbb{Q}} \left[\int_{t_1}^{t_2} T_i(x) dx  \mid \mathcal{F}_t \right].
\label{Futures_Basket_spatial_CAT}
\end{equation}

\begin{definition}\label{definition_Basket_spatial_GDD_definition}
	For a a specificied contract period, $t\le t_1 < t_2$ and at location $y_i$, the CAT futures price is defined as
	\begin{equation}
	\begin{aligned}
	\label{GDD_defn}
	GDD(\tau_1, \tau_2) &:= \int_{t_1}^{t_2} max \Big\{  M(t) - C\,,\,0 \Big\} dx
	\\
	&= 	\int_{t_1}^{t_2} max  \bigg\{   \sum_{i=1}^{N} \omega_{i} T_i(x)  - C\,,\,0 \bigg\} dx.
	\end{aligned}
	\end{equation}
\end{definition}

From Equations (\ref{definition_Basket_spatial_GDD_definition}), (\ref{GDDPRICING}), and using the linearity of expectation,
\begin{equation}
\begin{aligned}
F_{GDD}(t,t_1,t_2; D) &= \mathbb{E}_{\mathbb{Q}} \Bigg( \int_{t_1}^{t_2} max  \bigg\{   \sum_{i=1}^{N} \omega_{i} T_i(x)  - C\,,\,0 \bigg\} dx \biggr\rvert  \mathcal{F}_t \Bigg)
\\
&= \int_{t_1}^{t_2} \mathbb{E}_{\mathbb{Q}} \Bigg(  max  \bigg\{   \sum_{i=1}^{N} \omega_{i} T_i(x)  - C \,,\,0 \bigg\} \biggr\rvert  \mathcal{F}_t \Bigg)dx.
\end{aligned}
\label{Spatial_GDD_pricing}
\end{equation}

\begin{lemma}
	If the dynamics of the daily average temperature follows Equation (\ref{spatial_new_temperature_model}), then the explicit solution for the $i^{th}$ location $y_i$ is given as 
	\begin{equation*}
	T_i(t) = S_i(t) + \big(T_i(0) -S_i(0)\big) e^{\int_{0}^{t}\beta_i(s)ds} +
	\int_{0}^{t} \lambda_i e^{\int_{u}^{t}\beta_i(s)ds} du + \int_{0}^{t} \sigma_i T_i(u) e^{\int_{u}^{t}\beta_i(s)ds} \sum_{j=1}^{i}L_{ij}dW_j^{\lambda}(u).
	\end{equation*}
	\label{lemma_spatial_CAT_pricing}
\end{lemma}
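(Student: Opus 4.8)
The plan is to mirror the one-dimensional arguments of Lemma~\ref{lemma_explicit_proposed_temperature_model} and Lemma~\ref{lemma_explicit_girsanov_temperature_model}, carried out componentwise. First I would write the vector SDE (\ref{spatial_new_temperature_model}) in its $i$-th coordinate. Because $\bm{\sigma}$ and $\bm{T}(t)$ act as scalars on each component and $\bm{L}$ is lower triangular (so $L_{ij}=0$ for $j>i$), the $i$-th diffusion row $\sigma_i T_i(t)(L_{i1},\dots,L_{iN})$ collapses to a finite sum, giving
\begin{equation*}
dT_i(t) = dS_i(t) + \big[\lambda_i + \beta_i(t)\big(T_i(t)-S_i(t)\big)\big]dt + \sigma_i T_i(t)\sum_{j=1}^{i}L_{ij}\,dW_j^{\lambda}(t).
\end{equation*}
Since the $W_j^{\lambda}$ are independent standard Brownian motions under $\mathbb{Q}$ (from the $\mathbb{R}^N$ Girsanov construction, (\ref{spatial_radon_nikodym_4})), this is a scalar linear SDE in the deseasonalized variable $\tilde{T}_i := T_i - S_i$.

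Next I would introduce the integrating factor exactly as in the scalar proofs, setting $F(\tilde{T}_i,t)=\tilde{T}_i\, e^{-\int_{0}^{t}\beta_i(s)ds}$. Because $F$ is linear in $\tilde{T}_i$, its second derivative vanishes and It\^{o}'s Lemma produces no quadratic-variation correction; the $\beta_i(t)\tilde{T}_i$ drift cancels against $\partial F/\partial t$, leaving
\begin{equation*}
dF_t = e^{-\int_{0}^{t}\beta_i(s)ds}\bigg(\lambda_i\,dt + \sigma_i T_i(t)\sum_{j=1}^{i}L_{ij}\,dW_j^{\lambda}(t)\bigg).
\end{equation*}

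Finally I would integrate this over $[0,t]$, use $F_0=\tilde{T}_i(0)$, and multiply through by $e^{\int_{0}^{t}\beta_i(s)ds}$, combining exponentials via $\int_{0}^{t}\beta_i(s)ds-\int_{0}^{u}\beta_i(s)ds=\int_{u}^{t}\beta_i(s)ds$. Substituting $\tilde{T}_i=T_i-S_i$ back then yields the claimed formula, with the constant market price of risk $\lambda_i$ generating the deterministic integral $\int_{0}^{t}\lambda_i e^{\int_{u}^{t}\beta_i(s)ds}du$ and the truncated sum $\sum_{j=1}^{i}L_{ij}\,dW_j^{\lambda}$ appearing in the stochastic term. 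The computation is routine once the scalar reduction is in place; the only genuinely new point over the one-dimensional case is the bookkeeping in passing from the vector SDE to its $i$-th component and exploiting the lower-triangularity of the Cholesky factor $\bm{L}$ to terminate the noise sum at $j=i$, so I expect that step, rather than any analytic difficulty, to be where care is needed.
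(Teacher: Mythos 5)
Your proposal is correct and takes essentially the same route as the paper: the paper's entire proof is the remark that the result ``follows directly from the proof of Lemma \ref{lemma_explicit_proposed_temperature_model} and observing the $i^{th}$ location,'' and your argument is exactly that reduction, written out in full (coordinate projection of Equation (\ref{spatial_new_temperature_model}), integrating factor $e^{-\int_{0}^{t}\beta_i(s)ds}$, It\^{o}'s Lemma with vanishing second-order term, then integration and back-substitution). The only detail you make explicit that the paper leaves implicit is the use of the lower-triangularity of $\bm{L}$ to truncate the noise sum at $j=i$, which is precisely the bookkeeping needed to pass from the vector SDE to the stated componentwise formula.
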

\begin{proof}
	The proof follows directly from the proof of Lemma \ref{lemma_explicit_proposed_temperature_model} and observing the $i^{th}$ location $y_i$. 
\end{proof}

\begin{proposition}\label{proposition_mean-reverting_CAT_basket_futures}
	At a spatial location $i$, the futures contract price on basket of CAT index following the mean-reverting regime in Equation (\ref{basket_deseasonalizedmodel}) is calculated as
	\begin{equation*}
	\begin{aligned}
	F_{CAT}(t,t_1, t_2; M) = \sum_{i=1}^{N} \omega_{i}  &\bigg[ \int_{t_1}^{t_2} S_i(x) + \int_{t_1}^{t_2} \big(T_i(t) -S_i(t)\big) e^{\int_{t}^{x}\beta_i(s)ds} + 
	\\
	&\int_{t}^{t_1} \int_{t_1}^{t_2} e^{\int_{0}^{x}\beta_i(s)ds} \lambda_i e^{\int_{u}^{0}\beta_i(s)ds} dxdu + 
	\\
	&\int_{u}^{t_2} \int_{t_1}^{t_2} e^{\int_{0}^{x}\beta_i(s)ds} \lambda_i e^{\int_{u}^{0}\beta_i(s)ds} dxdu \bigg].
	\end{aligned}
	\end{equation*}
\end{proposition}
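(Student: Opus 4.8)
The plan is to reduce the $N$-dimensional basket price to a weighted superposition of the one-dimensional CAT futures prices and then reuse the computation already performed in Proposition~\ref{Proposition_CAT_Futures}. The starting point is the pricing identity~(\ref{Futures_Basket_spatial_CAT}), which, by linearity of the conditional expectation, expresses $F_{CAT}(t,t_1,t_2;M)$ as $\sum_{i=1}^{N}\omega_i\,\mathbb{E}_{\mathbb{Q}}\big[\int_{t_1}^{t_2}T_i(x)\,dx\mid\mathcal{F}_t\big]$. It therefore suffices to evaluate, for each fixed location $i$, the scalar quantity $\mathbb{E}_{\mathbb{Q}}\big[\int_{t_1}^{t_2}T_i(x)\,dx\mid\mathcal{F}_t\big]$ and then reassemble the $\omega_i$-weighted sum.

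First I would substitute the explicit solution of Lemma~\ref{lemma_spatial_CAT_pricing}, shifted so that the initial data are taken at the valuation time $t$ rather than at $0$; for $x\ge t$ this reads $T_i(x)=S_i(x)+(T_i(t)-S_i(t))e^{\int_t^x\beta_i(s)ds}+\int_t^x\lambda_i e^{\int_u^x\beta_i(s)ds}\,du+\int_t^x\sigma_i T_i(u)e^{\int_u^x\beta_i(s)ds}\sum_{j=1}^{i}L_{ij}\,dW_j^{\lambda}(u)$. Applying $\mathbb{E}_{\mathbb{Q}}[\,\cdot\mid\mathcal{F}_t]$ and integrating over $[t_1,t_2]$, the deterministic seasonal term and the mean-reversion term pass through unchanged, whereas the stochastic term drops out. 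The essential observation is that each $W_j^{\lambda}$ is a $\mathbb{Q}$-Brownian motion by the $N$-dimensional Girsanov construction, so the It\^{o} integral is a $\mathbb{Q}$-martingale with vanishing conditional mean; this holds term by term in the Cholesky sum irrespective of the entries $L_{ij}$, which is precisely why the correlation structure $\bm{\Omega}=\bm{L}\bm{L}^{T}$ never enters the first-moment (futures) price. A small integrability check---inherited from the Novikov-type hypothesis of the Girsanov step---is all that is needed to justify interchanging expectation and integration and to guarantee that the It\^{o} integral has zero mean.

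The remainder is deterministic and parallels the treatment of the term $L_1$ in Proposition~\ref{Proposition_CAT_Futures}. The surviving drift contribution is the double integral $\int_{t_1}^{t_2}\int_t^x \lambda_i e^{\int_u^x\beta_i(s)ds}\,du\,dx$, which I would recast by inserting the indicator $\mathds{1}_{[t,x]}(u)$, invoking Fubini to exchange the order of the $u$- and $x$-integrations, and splitting the $u$-range into $[t,t_1]$ and $[t_1,t_2]$. Using the identity $e^{\int_0^x\beta_i(s)ds}\,e^{\int_u^0\beta_i(s)ds}=e^{\int_u^x\beta_i(s)ds}$ to match the display in the statement then produces exactly the two nested integrals appearing there, and summing against the weights $\omega_i$ yields the claimed formula. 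The one step that demands genuine care---the main obstacle---is this indicator/Fubini bookkeeping: one must track the region $\{t\le u\le x\le t_2\}$ correctly through the change in integration order and reconcile the limits of the split integrals with the $\int_{u}^{t_2}$ notation used to present $L_1$; once the domains are aligned, the result is a direct consequence of linearity and of the already-established one-dimensional Proposition~\ref{Proposition_CAT_Futures}.
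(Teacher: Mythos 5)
Your proposal is correct and follows essentially the same route as the paper's own proof: apply linearity of conditional expectation to the basket, substitute the explicit solution of Lemma \ref{lemma_spatial_CAT_pricing} for $x \ge t$, kill the It\^{o} integral term by the $\mathbb{Q}$-martingale property, and handle the drift double integral via the indicator $\mathds{1}_{[t,x]}(u)$, Fubini, and the split of the $u$-range at $t_1$ before reassembling the $\omega_i$-weighted sum. Your added remarks (the integrability check and the observation that the Cholesky factors $L_{ij}$ never affect the first moment) are sound refinements of, not departures from, the paper's argument.
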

\begin{proof}
	For $x \ge t$ in Lemma \ref{lemma_spatial_CAT_pricing}, 
	\begin{equation}
	T_i(x) = S_i(x) + \big(T_i(t) -S_i(t)\big) e^{\int_{t}^{x}\beta_i(s)ds} +
	\int_{t}^{x} \lambda_i e^{\int_{u}^{x}\beta_i(s)ds} du + \int_{t}^{x} \sigma_i T_i(u) e^{\int_{u}^{x}\beta_i(s)ds} \sum_{j=1}^{i}L_{ij}dW_j^{\lambda}(u)
	\label{Spatial_CAT_x_ge_t}
	\end{equation}
	
	\begin{equation*}
	\begin{aligned}
	\mathbb{E}_{\mathbb{Q}} \big[\int_{t_1}^{t_2} T_i(x) dx  \mid \mathcal{F}_t \big] &= \mathbb{E}_{\mathbb{Q}}\bigg[\int_{t_1}^{t_2} \bigg( 
	S_i(x) + \big(T_i(t) -S_i(t)\big) e^{\int_{t}^{x}\beta_i(s)ds} +
	\int_{t}^{x} \lambda_i e^{\int_{u}^{x}\beta_i(s)ds} du + \\ &\int_{t}^{x} \sigma_i T_i(u) e^{\int_{u}^{x}\beta_i(s)ds} \sum_{j=1}^{i}L_{ij}dW_j^{\lambda}(u)\bigg)dx \big| \mathcal{F}_t \bigg]
	\\
	&= \int_{t_1}^{t_2} S_i(x) dx + \int_{t_1}^{t_2} \big(T_i(t) -S_i(t)\big) e^{\int_{t}^{x}\beta_i(s)ds} dx +
	\int_{t_1}^{t_2}\int_{t}^{x} \lambda_i e^{\int_{u}^{x}\beta_i(s)ds} dudx 
	\\
	&= \int_{t_1}^{t_2} S_i(x) + \int_{t_1}^{t_2} \big(T_i(t) -S_i(t)\big) e^{\int_{t}^{x}\beta_i(s)ds} + L_1, 
	\end{aligned}
	\end{equation*}
	where 
	\begin{equation*}
	\begin{aligned}
	L_1 &= \int_{t_1}^{t_2}\int_{t}^{x} \lambda_i e^{\int_{u}^{x}\beta_i(s)ds} dudx
	\\
	&= \int_{t_1}^{t_2} \int_{t}^{t_2} \mathds{1}_{[t,x]}(u) \lambda_i e^{\int_{u}^{x}\beta_i(s)ds} dudx 
	\\
	&= \int_{t}^{t_1} \int_{t_1}^{t_2} e^{\int_{0}^{x}\beta_i(s)ds} \lambda_i e^{\int_{u}^{0}\beta_i(s)ds} dxdu + \int_{u}^{t_2} \int_{t_1}^{t_2} e^{\int_{0}^{x}\beta_i(s)ds} \lambda_i e^{\int_{u}^{0}\beta_i(s)ds} dxdu
	\end{aligned}
	\end{equation*}
	\begin{equation*}
	\begin{aligned}
	F_{CAT}(t,t_1, t_2; M) = \sum_{i=1}^{N} \omega_{i}  &\bigg[ \int_{t_1}^{t_2} S_i(x) + \int_{t_1}^{t_2} \big(T_i(t) -S_i(t)\big) e^{\int_{t}^{x}\beta_i(s)ds} + 
	\\
	&\int_{t}^{t_1} \int_{t_1}^{t_2} e^{\int_{0}^{x}\beta_i(s)ds} \lambda_i e^{\int_{u}^{0}\beta_i(s)ds} dxdu + 
	\\
	&\int_{u}^{t_2} \int_{t_1}^{t_2} e^{\int_{0}^{x}\beta_i(s)ds} \lambda_i e^{\int_{u}^{0}\beta_i(s)ds} dxdu \bigg].
	\end{aligned}
	\end{equation*}
\end{proof}

\begin{proposition}\label{proposition_Spatial_GDD_normal}
	The GDD futures price for a contract time $t \le t_1 < t_2$ at a given spatial location $y^i$ for a basket of CAT index following the normal regime is given by
	\begin{equation}
	F_{GDD}^{N}(t,t_1,t_2; D)  = \int_{t_1}^{t_2} \big( \xi (t,x) + 2\varDelta(t,x) \big)^{\ \frac{1}{2}} \big(\phi \big(\Lambda(t,x)\big) + \Lambda(t,x)\Phi\big(\Lambda(t,x)\big)\big)dx,
	\end{equation}
	where $\Phi$ is the cumulative standard normal distribution function, $\phi$ is the standard normal density function,	
	\begin{equation*}
	\Lambda(t,x) = \dfrac{\psi (t,x)- C}{ \bigg(\xi(t,x) + 2 \varDelta(t,x)  \bigg)^{\frac{1}{2}}},
	\end{equation*}	
	\begin{equation*}
	\psi(t,x)  =  \sum_{i=1}^{N} \omega^{i}\bigg( S_i(x) + \big(T_i(t) -S_i(t)\big) e^{\int_{t}^{x}\beta_i(s)ds} +
	\int_{t}^{x} \lambda_i e^{\int_{u}^{x}\beta_i(s)ds} du
	\bigg),
	\end{equation*}	
	\begin{equation*}
	\xi(t,x) 
	= \sum_{i=1}^{N} \omega^{2}_i \sum_{j=1}^{i}\displaystyle\int_{t}^{x} \sigma_i^2 T_i(u)^2 e^{2\int_{u}^{x}\beta_i(s)ds} L^2_{ij} du,\text{ and}
	\end{equation*}
	\begin{equation*}
	\varDelta(t,x)	= \sum_{i=1}^{N} \sum_{j=i+1}^{N} \omega^{i} \omega^{j} \Big( \sum_{q=1}^{i}L^{iq}L^{jq} \Big) \displaystyle\int_{t}^{x} \sigma_i(u) \sigma_j(u) {T}_i(u) {T}_j(u) e^{\int_{u}^{x}(\beta_i(s) + \beta_j(s))ds} du
	\end{equation*} 
\end{proposition}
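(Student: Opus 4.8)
The plan is to evaluate the inner conditional expectation in Equation~(\ref{Spatial_GDD_pricing}) by first establishing that, conditionally on $\mathcal{F}_t$, the basket $M(x)=\sum_{i=1}^N \omega_i T_i(x)$ is Gaussian, and then applying the closed-form expression for the expected value of a call payoff written on a normal random variable.

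First I would insert the explicit solution of Lemma~\ref{lemma_spatial_CAT_pricing}, in the form of Equation~(\ref{Spatial_CAT_x_ge_t}) valid for $x\ge t$, into $M(x)$. Summing the weighted locations, the deterministic terms (the seasonal part, the mean-reversion decay of the initial de-seasonalized level, and the drift from the market price of risk) aggregate exactly into $\psi(t,x)$, while the stochastic terms combine into a sum of It\^{o} integrals against the independent Brownian motions $W_j^\lambda$. Since each such integral has zero mean, this shows $M(x)\mid\mathcal{F}_t \sim N\big(\psi(t,x),V(t,x)\big)$ for a variance $V(t,x)$ to be identified.

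The crux of the argument is the evaluation of $V(t,x)$. I would regroup the stochastic part
\begin{equation*}
\sum_{i=1}^N \omega_i \int_t^x \sigma_i T_i(u)\, e^{\int_u^x \beta_i(s)ds} \sum_{j=1}^i L_{ij}\, dW_j^\lambda(u)
\end{equation*}
by the driving index $j$, so as to collect, for each independent motion $W_j^\lambda$, a single integrand. Applying the It\^{o} isometry together with $dW_j^\lambda\,dW_k^\lambda = \delta_{jk}\,du$ then produces one integral whose integrand is a double sum over locations $i,k$ carrying the coupling factor $\sum_{q\le \min(i,k)} L_{iq}L_{kq}$ induced by the lower-triangular Cholesky matrix. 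Separating the diagonal contribution $i=k$ from the off-diagonal contribution $i\ne k$ (the latter written symmetrically as twice the sum over $i<k$, for which $\min(i,k)=i$) reproduces precisely $\xi(t,x)$ from the diagonal and $2\varDelta(t,x)$ from the cross terms, so that $V(t,x)=\xi(t,x)+2\varDelta(t,x)$.

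Finally, with $s(t,x):=\big(\xi(t,x)+2\varDelta(t,x)\big)^{1/2}$ and $\mu:=\psi(t,x)$, I would invoke the standard Gaussian identity
\begin{equation*}
\mathbb{E}\big[\max(X-C,0)\big] = (\mu-C)\,\Phi\!\Big(\tfrac{\mu-C}{s}\Big) + s\,\phi\!\Big(\tfrac{\mu-C}{s}\Big), \qquad X\sim N(\mu,s^2).
\end{equation*}
Writing $\mu-C = s\,\Lambda(t,x)$ collapses the right-hand side to $s\big(\phi(\Lambda)+\Lambda\,\Phi(\Lambda)\big)$, and integrating over $[t_1,t_2]$ yields the stated formula. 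The main obstacle is the variance bookkeeping in the third step: one must track the lower-triangular Cholesky coupling carefully so that all cross-location correlations are captured by $\varDelta(t,x)$ and the factor of two emerges correctly, which is precisely what distinguishes the multi-dimensional case from the single-location computation of Proposition~\ref{proposition_GDD_futures}.
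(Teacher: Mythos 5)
Your proposal is correct and follows essentially the same route as the paper's proof: decompose each $T_i(x)$ via the explicit solution into a deterministic part (summing to $\psi$) and Gaussian It\^{o}-integral part, establish conditional normality of the basket with variance $\xi + 2\varDelta$ through the Cholesky coupling, and then apply the Gaussian call-payoff formula before integrating over $[t_1,t_2]$. The only cosmetic differences are that you organize the variance computation by regrouping over the driving Brownian index $j$ and then splitting diagonal from cross terms, whereas the paper computes $\mathrm{Var}(B_i)$ and $\mathrm{Cov}(B_i,B_j)$ separately and sums them, and that you invoke the standard identity $\mathbb{E}[\max(X-C,0)] = (\mu-C)\Phi\big(\tfrac{\mu-C}{s}\big) + s\,\phi\big(\tfrac{\mu-C}{s}\big)$ directly while the paper rederives it by explicit integration against the standard normal density; these are the same computations in different packaging.
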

\begin{proof}
	Let
	\begin{equation}
	D(x) = \sum_{i=1}^{N} \omega^{i} \tilde{T}_t^i.
	\end{equation}
	For convenience, we denote the deterministic and random components of Equation (\ref{Spatial_CAT_x_ge_t}) as $A_i(t,x)$ and $B_i(t,x)$, respectively. That is,
	\begin{equation*}
	\begin{aligned}
	A_i(t,x) &= S_i(x) + \big(T_i(t) -S_i(t)\big) e^{\int_{t}^{x}\beta_i(s)ds} +
	\int_{t}^{x} \lambda_i e^{\int_{u}^{x}\beta_i(s)ds} du,
	\\
	B_i(t,x) &=\int_{t}^{x} \sigma_i T_i(u) e^{\int_{u}^{x}\beta_i(s)ds} \sum_{j=1}^{i}L_{ij}dW_j^{\lambda}(u) =\sum_{j=1}^{i} \int_{t}^{x} \sigma_i T_i(u) e^{\int_{u}^{x}\beta_i(s)ds} L_{ij}dW_j^{\lambda}(u).
	\end{aligned}
	\end{equation*}
	Hence,
	\begin{equation}
	D(x) = \sum_{i=1}^{N} \omega_{i}\Big(A_i(t,x) + B_i(t,x)\Big).
	\label{spatial_normal_split}
	\end{equation}	
	The distribution of the basket $D(x)$ at time $t$ wll be established. $A_i(t,x)$ is, however, deterministic. By It\^{o} isometry, and at $t$, 
	\begin{equation*}
	\int_{t}^{x} \sigma_i T_i(u) e^{\int_{u}^{x}\beta_i(s)ds} L_{ij}dW_j^{\lambda}(u) \sim N\bigg(0, \displaystyle\int_{t}^{x} \sigma_i^2 T_i(u)^2 e^{2\int_{u}^{x}\beta_i(s)ds} L^2_{ij} du \bigg).
	\end{equation*}
	However, the $W^{\lambda}_j(u)$ are independent for each $j$. The variances can, therefore, be summed to obtain the variance of $B_i(t,x)$, 
	\begin{equation*}
	B_i(t,x) \sim N\bigg(0, \sum_{j=1}^{i}\displaystyle\int_{t}^{x} \sigma_i^2 T_i(u)^2 e^{2\int_{u}^{x}\beta_i(s)ds} L^2_{ij} du \bigg) = N\bigg(0, \Psi^2(t,x) \bigg).
	\end{equation*}
	As $\sum_{i=1}^{N} \omega_{i} B^i(t,x)$ is a sum of normally distributed random variables, it is normally distributed with respective mean and variance:
	\begin{equation*}
	\mathbb{E}\Big( \sum_{i=1}^{N} \omega_{i} B^i(t,x) \Big) = \sum_{i=1}^{N} \omega_{i} \mathbb{E}\Big(B_i(t,x)\Big) = 0,
	\end{equation*}
	and 
	\begin{equation*}
	\begin{aligned}
	Var \Big( \sum_{i=1}^{N} \omega_{i} B_i(t,x)\Big) &= \sum_{i=1}^{N} Var\Big(\omega_{i} B_i(t,x)\Big) + 2 \sum_{i < j} Cov\Big( \omega_{i}H^i, \omega_{j}H^j \Big)
	\\
	&=\sum_{i=1}^{N} \omega^{2}_i Var(H_i(t,x))+ 2 \sum_{i <j}  \omega_{i}  \omega_{j} Cov\Big(H_i, H_j\Big)
	\\
	&=\sum_{i=1}^{N} \omega^{2}_i \Psi^2(t,x)+ 2 \sum_{i <j}  \omega_{i}  \omega_{j} Cov\Big(B_i, B_j\Big).
	\end{aligned}
	\end{equation*}
	Consider $Cov\big(B_i, B_j\big)$ for $j > 1$. With respect to the standard Brownian motions $W_1^{\lambda}(u)$, both $B_1$ and $B_j$ are in the same integral form. For $j>1$ and by the independence of $W_1^{\lambda}(u)$ and $W_j^{\lambda}(u)$, the covariance only exists between these two integrals. Therefore, 	
	\begin{equation*}
	\begin{aligned}
	Cov\Big(B_i, B_j\Big) &= \displaystyle\int_{t}^{x} \sigma_i(u) \sigma_j(u) {T}_i(u) {T}_j(u) 
	e^{\int_{u}^{x}(\beta_i(s) + \beta_j(s))ds}
	\Big( \sum_{q=1}^{i}L^{iq}L^{jq} \Big) du, \quad \forall j > 1
	\\
	&= \Big( \sum_{q=1}^{i}L^{iq}L^{jq} \Big) \displaystyle\int_{t}^{x} \sigma_i(u) \sigma_j(u) {T}_i(u) {T}_j(u) e^{\int_{u}^{x}(\beta_i(s) + \beta_j(s))ds} du, \quad \forall j > 1.
	\end{aligned}
	\end{equation*}
	Define $\varUpsilon{ij}(t,x) := \displaystyle\int_{t}^{x} \sigma_i(u) \sigma_j(u) {T}_i(u) {T}_j(u) e^{\int_{u}^{x}(\beta_i(s) + \beta_j(s))ds} du$,
	\begin{equation*}
	\begin{aligned}
	Var \Big( \sum_{i=1}^{N} \omega_{i} B_i(t,x)\Big) &= \sum_{i=1}^{N} \omega^{2}_i \Psi^2(t,x) + 2 \sum_{i <j}  \omega_{i}  \omega_{j} \Big( \sum_{q=1}^{i}L_{iq}L_{jq} \Big) \varUpsilon{ij}(t,x) 
	\\
	&= \sum_{i=1}^{N} \omega^{2}_i \Psi^2(t,x) + 2 \sum_{i=1}^{N}\sum_{j=i+1}^{N} \omega_{i} \omega_{j} \Big( \sum_{q=1}^{i}L_{iq}L_{jq} \Big) \varUpsilon_{ij}(t,x).
	\end{aligned}
	\end{equation*}
	From (\ref{spatial_normal_split}), 
	\begin{equation*}
	D(x) \sim N \bigg( \sum_{i=1}^{N} \omega_{i}A_i(t,x) ,  \sum_{i=1}^{N} \omega^{2}_i \Psi^2(t,x) + 2 \sum_{i=1}^{N}\sum_{j=i+1}^{N} \omega_{i} \omega_{j} \Big( \sum_{q=1}^{i}L_{iq}L_{jq} \Big) \varUpsilon_{ij}(t,x)   \bigg).
	\end{equation*}
	Let 
	\begin{equation*}
	\psi (t,x) = \sum_{i=1}^{N} \omega_{i}A_i(t,x); \quad \xi(t,x) = \sum_{i=1}^{N} \omega^{2}_i \Psi^2(t,x);
	\end{equation*}
	\begin{equation*}
	\varDelta(t,x) = \sum_{i=1}^{N}\sum_{j=i+1}^{N} \omega_{i} \omega_{j} \Big( \sum_{q=1}^{i}L_{iq}L_{jq} \Big) \varUpsilon_{ij}(t,x).
	\end{equation*}
	$D(x)$ can be written, in the form of a standard normal random variable $Z \sim N(0,1)$, as
	\begin{equation}
	D(x) = \psi (t,x)  + \bigg(\xi(t,x) + 2 \varDelta(t,x)  \bigg)^{\frac{1}{2}}Z.
	\label{spatial_standard_normal_GDD_normal}
	\end{equation}
	Recall Equation (\ref{Spatial_GDD_pricing}) and consider
	\begin{equation*}
	\begin{aligned}
	\sum_{i=1}^{N} \omega_{i} {T}_i(x) - C &> 0	,
	\\
	\bigg(\xi(t,x) + 2 \varDelta(t,x)  \bigg)^{\frac{1}{2}}Z &> C - \psi (t,x),
	\\
	Z > \dfrac{C - \psi (t,x)}{ \bigg(\xi(t,x) + 2 \varDelta(t,x)  \bigg)^{\frac{1}{2}}} &:= \Lambda^{1}(t,x).
	\end{aligned}
	\end{equation*}
	From the above equation,
	\begin{equation}
	C = \psi(t,x) + \Lambda^{1}(t,x){\Big(\xi(t,x) + 2 \varDelta(t,x)  \Big)^{\frac{1}{2}}}.
	\label{spatial_normal_reformulate_normal}
	\end{equation}	
	It can be deduced from, Equations (\ref{Spatial_GDD_pricing}) and  (\ref{spatial_normal_reformulate_normal}), that
	\begin{equation}
	\mathbb{E}_{\mathbb{Q}} \Bigg(  max  \bigg\{\sum_{i=1}^{N} \omega_{i} {T}_i(x)  - C \,,\,0 \bigg\} dx \biggr\rvert  \mathcal{F}_t \Bigg) = \int_{\Lambda^{1}(t,x)}^{+\infty} \Big( D(x) - C \Big) \dfrac{e^{-\frac{1}{2}z^2}}{\sqrt{2\pi}} dz.
	\end{equation}
	Then, 	
	\begin{equation*}
	\begin{aligned}
	\mathbb{E}_{\mathbb{Q}} \Bigg(  max  \bigg\{   \sum_{i=1}^{N} \omega^{i} \tilde{T}_t^i  - K \,,\,0 \bigg\} dx \biggr\rvert  \mathcal{F}_t \Bigg) &= \int_{\Lambda^{1}(t,x)}^{+\infty} \bigg( \psi(t,x) + \big(\xi(t,x) + 2 \varDelta(t,x)  \big)^{\frac{1}{2}}z - 
	\\& \psi(t,x) - \Lambda^{1}(t,x) \big(\xi(t,x) + 2 \varDelta(t,x)  \big)^{\frac{1}{2}}\bigg)\dfrac{e^{-\frac{1}{2}z^2}}{\sqrt{2\pi}} dz
	\\
	&=  \int_{\Lambda^{1}(t,x)}^{+\infty} \bigg(\big(\xi(t,x) + 2 \varDelta(t,x)  \big)^{\frac{1}{2}}z - \\ &\Lambda^{1}(t,x) \Big(\xi(t,x) + 2 \varDelta(t,x)  \Big)^{\frac{1}{2}}\bigg) \dfrac{e^{-\frac{1}{2}z^2}}{\sqrt{2\pi}} dz
	\\
	&= \Big(\xi(t,x) + 2 \varDelta(t,x)  \Big)^{\frac{1}{2}}
	\bigg( \int_{\Lambda^{1}(t,x)}^{+\infty} \dfrac{ze^{-\frac{1}{2}z^2}}{\sqrt{2\pi}} dz + \Lambda^{1}(t,x)\Phi\big(-\Lambda^{1}(t,x)
	\bigg)
	\\
	&= \big(\xi(t,x) + 2 \varDelta(t,x)\big)^{\frac{1}{2}}
	\bigg(\dfrac{e^{-\frac{1}{2}\Lambda(t,x)^2}}{\sqrt{2\pi}} + \Lambda(t,x)\Phi\big(\Lambda(t,x)\bigg)
	\\
	&=\big( \xi (t,x) + 2\varDelta(t,x) \big)^{\ \frac{1}{2}} \big(\phi \big(\Lambda(t,x)\big) + \Lambda(t,x)\Phi\big(\Lambda(t,x)\big)\big).
	\end{aligned}
	\end{equation*}
	Therefore,
	\begin{equation*}
	F_{GDD}^{N}(t,t_1,t_2; D)  = \int_{t_1}^{t_2} \big( \xi (t,x) + 2\varDelta(t,x) \big)^{\ \frac{1}{2}} \big(\phi \big(\Lambda(t,x)\big) + \Lambda(t,x)\Phi\big(\Lambda(t,x)\big)\big)dx,
	\end{equation*} 
	where 
	\begin{equation*}
	\Lambda(t,x) = - \Lambda^1(t,x) = \dfrac{\psi (t,x)- C}{ \bigg(\xi(t,x) + 2 \varDelta(t,x)  \bigg)^{\frac{1}{2}}}.
	\end{equation*}
\end{proof}

\section{Discussion and Conclusion}

The agricultural sector employs a large workforce in Ghana and serves as the principal source of income for most people, especially small-holder farmers in the Northern savanna. This sector is vulnerable to climate shocks and, hence, there is a need for a reliable and efficient insurance product (weather derivative) for small-holder farmers and stakeholders. However, most farmers are unwilling to buy this product as a result of high basis risks in the product design and pricing. To reduce basis risk in weather derivative design and pricing, in this study, the historical relationship between maize yield and some selected weather variables were determined by using machine learning ensemble technique and feature importance. Maize yield was chosen as the proxy for crop yields, due to its economic importance to the farmers in the Northern savanna. The feature importance gave a score of the importance of each weather variable in building the ensemble-learning model. The results indicated that average temperature was the most important weather variable that affected the yield of maize. Consequently, average daily temperature was used as the underlying weather variable for weather derivative design. In this way, product design basis risk was mitigated.

A time-varying daily average temperature model was proposed. The model was a mean-reverting process with time-varying speed of mean-reversion, seasonal mean, and a local volatility that captured the local variations of the daily average temperature. The model was extended to a multi-dimensional model for different but correlated locations. For analytical tractability for the pricing models, the residuals of the daily average temperature were assumed to be normally distributed. Our models captured most of the stylized facts of temperature, such as mean-reversion, seasonality, volatility, and locality features of the selected locations. Using the proposed average temmodels, closed-form pricing formulas for futures, options on futures, basket futures for cumulative average temperature (CAT), and growing degree-days (GDD) were presented. Since there is not yet a real WD market for the selected locations, we assume a constant market price of risk in the pricing models.

With these efficient and reliable pricing models, basis risks will be mitigated. As a result, there will be an increase in the willingness to pay for the contracts on the farmers’ side and trading activities in the WD market will also increase. Using the proposed spatial-temporal pricing model, it will be cost efficient to buy contracts for different but correlated farming locations rather than a single farming location. Farmers and other agricultural stakeholders can control covariate risks and hedge their crops against the perils of weather uncertainties.

In summary, basis risks (product-design and geographical) can mitigated if weather derivatives are properly designed.

\section*{Acknowledgements}
The first author wishes to thank African Union and Pan African University, Instutute for Basic Sciences Technology and Innovation,Kenya for their financial support for this research.
\section*{Disclosure statement}
The authors declare that there is no conflict of interest regarding the publication of this paper


\bibliographystyle{apa}
\bibliography{reference}

\end{document}